\documentclass[11pt]{article}
\usepackage{graphicx}
\usepackage{tabularx}
\usepackage{url}
\usepackage{amsthm}
\usepackage{amsmath}
\usepackage{amsfonts}
\usepackage[labelfont=bf]{caption}

\theoremstyle{plain}
\newtheorem{theorem}{Theorem}
\newtheorem{lemma}{Lemma}
\newtheorem{proposition}{Proposition}

\newtheorem{corollary}{Corollary}
\theoremstyle{definition}
\newtheorem{definition}{Definition}

\theoremstyle{remark}

\textheight=21cm
\topmargin=-1cm
\oddsidemargin=0.7cm
\textwidth=15.5cm
\date{}

\begin{document}

\title{Programs Versus Finite Tree-Programs}

\author{Mikhail Moshkov \\
Computer, Electrical and Mathematical Sciences \& Engineering Division \\
King Abdullah University of Science and Technology (KAUST) \\
Thuwal 23955-6900, Saudi Arabia\\ mikhail.moshkov@kaust.edu.sa
}

\maketitle

\begin{abstract}
In this paper, we study classes of structures and individual
structures for which programs implementing functions defined everywhere are
equivalent to finite tree-programs. The programs under
consideration may have cycles and at most countably many nodes. We start
with programs in which arbitrary terms of a given signature may be used in
function nodes and arbitrary formulas of this signature may be used in
predicate nodes. We then extend our results to programs that are close in
nature to computation trees: if such a program is a finite tree-program, then it is
an ordinary computation tree.
\end{abstract}

{\it Keywords}: structure, program, finite tree-program.

\section{Introduction \label{S6.1}}

Finite tree-programs such as decision trees  \cite{BreimanFOS84,Dobkin76,Moshkov05,Yao97} and computations trees \cite{Ben-Or83,Gabrielov17,Grigoriev96,Moshkov22a} are studied in different areas of computer science. Decision trees can be used as classifiers and predictors in data analysis. Decision and computation trees  can be used as algorithms for solving  problems of combinatorial optimization, computation geometry, etc.

Finite tree-programs are a very simple type of programs that implement functions defined everywhere. However, there are situations for which all programs from sufficiently broad classes that implement functions defined everywhere are equivalent to finite tree-programs. The paper is devoted to the study of such situations.

In this paper, a program is a pair $(S,U)$, where $S$ is a scheme of program of a signature $\sigma$ and $U$ is a structure of the signature $\sigma$. Signature $\sigma$ is a finite or countable set of predicate and function symbols with their arity, and constant symbols. Structure $U$ is a pair $(A,I)$, where $A$ is a nonempty set called the universe of $U$ and $I$ is an interpretation function mapping the symbols of $\sigma$ to predicates, functions, and constants in $A$. The schemes of programs under consideration may have cycles and finite or countable sets of nodes. Of particular interest to us are program schemes that are finite trees. The programs corresponding to them are called  finite tree-programs.

Let $K$ be a nonempty class of structures of the signature $\sigma$. A scheme of program $S$ is called total relative to $K$ if, for any structure $U \in K$, the function implemented by the program $(S,U)$ is everywhere defined. The class $K$  is called program-saturated if any scheme of program of the signature $\sigma$ that is total relative to $K$ is equivalent to a scheme of program that is a finite tree.

We begin our study with the general case, where schemes of programs can use arbitrary terms of a given signature $\sigma$ in function nodes and arbitrary formulas
of that signature in predicate nodes. These schemes of programs are discussed in Section \ref{S6.2}.

Section \ref{S6.3} studies program-saturated classes of structures. A
necessary and sufficient condition for a class of structures to be program-saturated is its compactness: any finitely satisfiable in $K$ set of formulas  with free variables from a finite set  is satisfiable in $K$. In particular, any axiomatizable class of structures has the property of compactness and is therefore a program-saturated class.

Section \ref{S6.4} is devoted to the study of individual structures, each of
which forms a class that is program-saturated. Such structures
include, in particular, all models of cardinality $\alpha $ of  $%
\alpha $-categorical theories. An example of such a model is the field of
complex numbers.

In Section \ref{S6.5}, we study the possibility of  elementary extension of the
structure to a structure that is program-saturated. We show that this is always
possible and find the minimum cardinality of such an extension.

Section \ref{S6.6} is devoted to the transfer of the obtained results to
programs that are essentially close to the computation trees: if the scheme of such a program is a finite tree, then the program is an ordinary
computation tree. Some results in this direction were published earlier
without proofs in \cite{Moshkov87}.

\section{Schemes of Programs and Programs \label{S6.2}}

We begin our study with the general case, where schemes of programs can use arbitrary
terms of a given signature in function nodes and arbitrary formulas of that signature in
predicate nodes.

Let $\omega =\{0,1,2,\ldots \}$ and $X=\{x_{i}:i\in \omega \}$ be the set of
variables. Let $\sigma $ be a finite or countable \emph{signature}: a set of
predicate and function symbols with their arity, and constant
symbols.  The concept of a formula of signature $\sigma $ is defined in a
standard way. First, the concept of a \emph{term} is defined, then the
concept of an \emph{atomic formula} using, in particular, the equality
symbol $=$, and finally the concept of a \emph{formula} using additionally
the logical symbols $\wedge $, $\lnot $ and $\forall $ (see definitions on
pp. 22 and 23 \cite{ChangK92}). For $n\in \omega \setminus \{0\}$, we denote
$X_{n}=\{x_{0},\ldots ,x_{n-1}\}$ and $F_{n}(\sigma )$ the set of formulas
of the signature $\sigma $ with free variables from the set $X_{n}$. Since $%
\sigma $ is finite or countable, for any $n\in \omega \setminus \{0\}$, the
set $F_{n}(\sigma )$ is countable.

\begin{definition}
A \emph{scheme of program} (\emph{scheme} in short) of the signature $\sigma $ is a pair $S=(n,G)$, where $n\in
\omega \setminus \{0\}$ and $G$ is a nonempty directed graph with finite or
countable set of nodes. The nodes of the graph $G$ are divided into three
types: function, predicate, and terminal. A \emph{function} node is labeled
with an expression of the form $x_{i}\Leftarrow t$, where $t$ is a term of
the signature $\sigma $. Only one edge leaves this node and this edge is not labeled. A \emph{predicate}
node is labeled with a formula of the signature $\sigma $. Two edges leave
this node. One edge is labeled with the number $1$ and another one is
labeled with the number $0$. A \emph{terminal} node is labeled with a number
from $\omega $. This node has no leaving edges. In addition, some node of
the graph is selected as the \emph{initial} one and marked with the $\ast $
sign. The set $X_{n}$ will be called the \emph{set of input variables} of the
scheme $S$. Usually, we will not distinguish between a scheme and its graph.
\end{definition}
\begin{definition}
Let $S=(n,G)$ be a scheme of the signature $\sigma $. A \emph{complete path}
of the scheme $S$ is a directed path that starts at the initial node of the
scheme $S$ and is either infinite or ends at a terminal node of the scheme
$S$.
\end{definition}

Let $\tau =w_{1},d_{1},w_{2},d_{2},\ldots $ be a complete path of the scheme
$S$. For $i=1,2,\ldots $, we correspond to the node $w_{i}$ of the path $%
\tau $ a sequence $M_{i}=t_{i0},t_{i1},\ldots $ of terms of the signature $%
\sigma $ with variables from $X_{n}$. Set $M_{1}=x_{0},x_{1},\ldots
,x_{n-2},x_{n-1},x_{n-1},x_{n-1},\ldots $. Let the sequences $M_{1},\ldots
,M_{i}$ already be defined. We now define the sequence $M_{i+1}$ associated
with the node $w_{i+1}$. If $w_{i}$ is a predicate node, then $M_{i+1}=M_{i}$%
. Let $w_{i}$ be a functional node labeled with the expression $%
x_{j}\Leftarrow t(x_{l_{1}},\ldots ,x_{l_{h}})$. Then $M_{i+1}=t_{i0},\ldots
,t_{ij-1},t(t_{il_{1}},\ldots ,t_{il_{h}}),t_{ij+1},\ldots $.

We correspond to each predicate node of the path $\tau $ a formula from the
set $F_{n}(\sigma )$. Let $w_{i}$ be a predicate node that is labeled with a
formula $\varphi (x_{l_{1}},\ldots ,x_{l_{q}})$ and let the edge $d_{i}$
of the path $\tau $ leaving the node $w_{i}$ be labeled with the number $%
c$. Then we correspond to $w_{i}$ the formula $\varphi (t_{il_{1}},\ldots
,t_{il_{q}})$ if $c=1$ and the formula $\lnot \varphi (t_{il_{1}},\ldots
,t_{il_{q}})$ if $c=0$.

Denote by $\Pi (\tau )$ the set of formulas corresponded to the predicate
nodes of the complete path $\tau $. If $\tau $ is a finite path, then denote
by $t_{\tau }$ the number attached to the terminal node of this path.

A \emph{structure} of the signature $\sigma $ is a pair $U=(A,I)$, where $A$
is a nonempty set called the \emph{universe} of the structure and\ $I$ is an
\emph{interpretation function} mapping the symbols of $\sigma $ to
appropriate predicates, functions and constants in $A$.  The cardinal $|A|$
is called the \emph{cardinality} of the structure $U$. Let $\varphi
\in F_{n}(\sigma )$ and $\bar{a}\in A^{n}$. The expression $U\models \varphi
(\bar{a})$ means that the formula $\varphi $ is \emph{true} in the structure
$U$ for the tuple $\bar{a}$. We will say that the formula $\varphi $ is
\emph{satisfiable} in the structure $U$ if $U\models \varphi (\bar{a})$ for
some $\bar{a}\in A^{n}$. We will say that a set of formulas $\Phi \subseteq
F_{n}(\sigma )$ is \emph{satisfiable} in the structure $U$ if there is a
tuple $\bar{a}\in A^{n}$ such that $U\models \varphi (\bar{a})$ for any
formula $\varphi \in \Phi $. Let $\varphi $ be a \emph{sentence} of the
signature $\sigma $, i.e., a formula of the signature $\sigma $ without free
variables. The expression $U\models \varphi $ means that the sentence $%
\varphi $ is \emph{true} in the structure $U$.

\begin{definition}
Let $S=(n,G)$ be a scheme of the signature $\sigma .$ The pair $\Gamma
=(S,U) $ will be called a \emph{program} over the structure $U$ with the set
of input variables $X_{n}$.
\end{definition}

The scheme $S$ will be called the \emph{scheme}
of the program $\Gamma $. A complete path $\tau $ of the scheme $S$ will be
called \emph{satisfiable} in $U$ on the tuple $\bar{a}\in A^{n}$ if,
for any formula $\varphi \in \Pi (\tau )$, $U\models \varphi (\bar{a})$. One
can show that there exists exactly one complete path of $S$ that is
satisfiable in $U$ on the tuple $\bar{a}$.

We correspond to the program $\Gamma $ possibly partial function $\pi
_{\Gamma }:A^{n}\rightarrow \omega $. Let $\bar{a}\in A^{n}$ and $\tau $ be
a complete path of $S$ that is satisfiable in $U$ on the tuple $\bar{a}$. If
$\tau $ is a finite path, then $\pi _{\Gamma }(\bar{a})=t_{\tau }$. If $\tau
$ is an infinite path, then the value $\pi _{\Gamma }(\bar{a})$ is
undefined. We will say that the program $\Gamma $ \emph{implements} the
function $\pi _{\Gamma }$.

\begin{definition}
A scheme $(n,G)$ of the signature $\sigma $ will be called a \emph{tree-scheme} of the
signature $\sigma $ if $G$ is a tree with the root that coincides with the initial node of $G$. A tree-scheme $(n,G)$ of the signature $\sigma $ will be called \emph{finite} if  $G$ is a finite tree.
\end{definition}

\begin{definition}
Let $S$ be a finite tree-scheme of the signature $\sigma$ and $U$ be a structure of the signature $\sigma$. Then the program $(S,U)$ will be called a \emph{finite tree-program} over the structure $U$.
\end{definition}

\section{Program-saturated Classes of Structures \label{S6.3}}

In this section, we study classes of structures that are program-saturated. We prove that a necessary and sufficient condition for a class of structures to be program-saturates is its  compactness. In particular, any axiomatizable class of
structures has the property of compactness.

\begin{definition}
Let $K$ be a nonempty class of structures of the signature $\sigma $. We
will say that a scheme $S$ of the signature $\sigma $ is \emph{total}
relative to $K$ if, for any structure $U$ from $K$, the program $(S,U)$
implements a total (everywhere defined) function.
\end{definition}

Let $S_{1}=(n_{1},G_{1})$\text{ and }$S_{2}=(n_{2},G_{2})$ be schemes of the
signature $\sigma $. We will say that complete paths $\tau
_{1}=w_{1}^{1},d_{1}^{1},w_{2}^{1},d_{2}^{1},\ldots $ and $\tau
^{2}=w_{1}^{2},d_{1}^{2},w_{2}^{2},d_{2}^{2},\ldots $ of schemes $S_{1}$ and
$S_{2}$ are \emph{isomorphic} if they have the same length and, for $%
i=1,2,\ldots $, the nodes $w_{i}^{1}$ and $w_{i}^{2}$ are labeled with the
same expressions or numbers if they are terminal and either the edges $d_{i}^{1}$ and $d_{i}^{2}$ are not
labeled or they are labeled with the same numbers. We will say that the
schemes $S_{1}$ and $S_{2}$ are \emph{strongly equivalent} relative to the
class $K$ if $n_{1}=n_{2}$ and, for any structure $U$ from $K$ and for any
tuple $\bar{a}\in A^{n_{1}}$, where $A$ is the universe of the structure $U$%
, the complete paths of the schemes $S_{1}$ and $S_{2}$ satisfiable in $U$
on the tuple $\bar{a}$ are isomorphic.

\begin{definition}
The class $K$ will be called \emph{%
program-saturated} if any total relative to $K$ scheme of the signature $%
\sigma $ is strongly equivalent relative to $K$ to a finite tree-scheme of the
signature $\sigma $.
\end{definition}

Let $\Phi \subseteq F_{n}(\sigma )$. We will say that the set of formulas $%
\Phi $ is \emph{satisfiable} in the class $K$ if there exists a structure $%
U\in K$ in which this set of formulas is satisfiable. We will say that the
set $\Phi $ is \emph{finitely satisfiable} in the class $K$ if any finite
subset of $\Phi $ is satisfiable in the class $K$.

\begin{definition}
We will say that the
class $K$ has the \emph{property of compactness }if, for any $n\in \omega
\setminus \{0\}$, any finitely satisfiable in $K$ set of formulas $\Phi
\subseteq F_{n}(\sigma )$ is satisfiable in $K$.
\end{definition}

A complete path $\tau $ of the scheme $S$ will be called \emph{satisfiable}
in the class $K$ if the set of formulas $\Pi (\tau )$ is satisfiable in the
class $K$.

Let $S=(n,G_{1})$ be a scheme of the signature $\sigma $. We denote by $%
R(S)=(n,G_{2})$ a tree-scheme of the signature $\sigma $, which has the following
property: there exists one-to-one correspondence between the set of complete
paths of $S$ and the set of complete paths of $R(S)$ for which the
corresponding paths are isomorphic. Denote by $C$ a subgraph of the graph $%
G_{2}$ induced by the set of nodes that belong to the complete paths of the
scheme $R(S)$, which are satisfiable in the class $K$. For each predicate
node of the graph $C$ with one leaving edge, we add to $C$ a new
terminal node labeled with the number $0$ into which we draw the
missing edge. We denote the obtained graph $G_{3}$. Denote $R(S,K)=(n,G_{3})$%
. Evidently, $R(S,K)$ is a tree-scheme of the signature $\sigma $.

\begin{lemma}
\label{L6.1}Let $K$ be a nonempty class of structures of the signature $%
\sigma $, which has the property of compactness, and $S$ be a scheme of
the signature $\sigma $ that is total relative to $K$. Then all satisfiable
in the class $K$ complete paths of the scheme $S$ are finite and the set of
satisfiable in the class $K$ complete paths of the scheme $S$ is finite.
\end{lemma}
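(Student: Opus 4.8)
The plan is to prove the two assertions in order: first that no satisfiable-in-$K$ complete path can be infinite, and then to use this as the contradiction target in a K\"onig-type argument bounding the number of such paths. The first part rests only on totality, while the second is where compactness does the real work.

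First I would show that every satisfiable-in-$K$ complete path $\tau$ of $S$ is finite. Suppose not. By definition $\Pi(\tau)$ is satisfiable in some $U\in K$, so there is a tuple $\bar{a}\in A^{n}$ with $U\models\varphi(\bar{a})$ for every $\varphi\in\Pi(\tau)$; that is, $\tau$ is satisfiable in $U$ on $\bar{a}$. By the uniqueness of the satisfiable complete path noted after the definition of a program, $\tau$ is the complete path that $(S,U)$ follows on input $\bar{a}$. If $\tau$ is infinite, the function implemented by $(S,U)$ is undefined at $\bar{a}$, contradicting the fact that $S$ is total relative to $K$. This step invokes totality only, not compactness.

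For the second assertion I would pass to the tree unfolding $R(S)$, whose complete paths are in a one-to-one correspondence with those of $S$ under which corresponding paths are isomorphic; since isomorphic paths carry the same labels and edge markings, the associated formula sets coincide, so it suffices to bound the satisfiable complete paths of $R(S)$. Call a node $v$ of $R(S)$ \emph{satisfiable} if the set of formulas attached to the predicate nodes along the root-to-$v$ path $\rho_v$ is satisfiable in $K$; denote this set $\Pi(\rho_v)$. The satisfiable nodes form a subtree $T$, since an ancestor $u$ of a satisfiable node $v$ has $\Pi(\rho_u)\subseteq\Pi(\rho_v)$ and hence is itself satisfiable. As function, predicate, and terminal nodes have out-degree $1$, $2$, and $0$ respectively, $T$ is finitely branching.

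The heart of the argument, and its main obstacle, is ruling out an infinite $T$. If $T$ were infinite, K\"onig's lemma would yield an infinite branch $\tau^{*}$, which is an infinite complete path of $R(S)$. Every finite initial segment of $\tau^{*}$ ends at a satisfiable node, so any finite subset of $\Pi(\tau^{*})$ is contained in some satisfiable $\Pi(\rho_v)$ and is therefore satisfiable in $K$; thus $\Pi(\tau^{*})$ is finitely satisfiable in $K$. Here compactness enters, upgrading finite satisfiability to satisfiability and making $\tau^{*}$ an infinite satisfiable-in-$K$ path; the path of $S$ corresponding to $\tau^{*}$ then contradicts the first assertion. Hence $T$ is finite. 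Finally, by the first assertion every satisfiable complete path of $S$ is finite, so under the correspondence it matches a finite satisfiable complete path of $R(S)$ ending at a satisfiable terminal node, i.e., at a distinct node of $T$; since $|T|<\infty$, only finitely many such paths exist.
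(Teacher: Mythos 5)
Your proof is correct and follows essentially the same route as the paper's: unfold $S$ into the tree-scheme $R(S)$, run a K\"onig-lemma argument on a finitely branching subtree of ``satisfiable'' nodes, and use compactness to turn the finitely satisfiable formula set of the resulting infinite branch into an infinite satisfiable complete path, contradicting the first assertion. The only differences are cosmetic: you take the subtree of nodes whose root-segment formula sets are satisfiable in $K$ (rather than the paper's subgraph of nodes lying on satisfiable complete paths), and you spell out the steps the paper dismisses as ``evidently'' and ``similar to the proof of K\"onig's lemma.''
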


\begin{proof}
Evidently, all satisfiable in the class $K$ complete paths of the scheme $S$
are finite. Let us assume that the set of satisfiable in the class $K$
complete paths of the scheme $S$ is infinite. Similar to the proof of K\"{e}%
nig's lemma \cite{Kleene67}, one can show that there is an infinite complete
path $\tau $ of the scheme $R(S)$ such that each node of the path $\tau $
belongs to a satisfiable in the class $K$ complete paths of the scheme $R(S)$%
. Therefore the set of formulas $\Pi (\tau )$ is finitely satisfiable in $K$%
. Taking into account that the class $K$ has the property of compactness, we
obtain that the set of formulas $\Pi (\tau )$ is satisfiable in $K$. Thus,
there is an infinite complete path in $S$ that is satisfiable in the set $K$
but this is impossible.
\end{proof}

\begin{theorem}
\label{T6.1} A nonempty class $K$ of structures of the signature $\sigma $
is program-saturated if and only if it has the property of compactness.
\end{theorem}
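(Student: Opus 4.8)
The plan is to prove the two implications separately, leaning on the machinery assembled before the statement: Lemma \ref{L6.1} and the tree-scheme $R(S,K)$. For ``compactness $\Rightarrow$ program-saturated'' I would take an arbitrary scheme $S=(n,G_1)$ total relative to $K$ and show that $R(S,K)$ is a \emph{finite} tree-scheme strongly equivalent to $S$ relative to $K$. For the converse I would argue contrapositively: assuming $K$ fails compactness, I would exhibit one total scheme that cannot be strongly equivalent to any finite tree-scheme.

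\textbf{First direction.} By Lemma \ref{L6.1}, since $K$ has compactness and $S$ is total relative to $K$, every complete path of $S$ satisfiable in $K$ is finite and there are only finitely many such paths. As $R(S)$ is an isomorphic unfolding of $S$ (corresponding complete paths carry the same node and edge labels, hence the same associated term sequences $M_i$ and the same sets $\Pi$), the satisfiable-in-$K$ complete paths of $R(S)$ are exactly the images of those of $S$: finitely many, all finite. Consequently the induced subgraph $C$ has finitely many nodes, and $G_3$ adds at most one terminal per predicate node of $C$, so $R(S,K)$ is indeed a finite tree-scheme. The substance of this direction is strong equivalence. Fix $U\in K$ and $\bar a\in A^n$, and let $\tau$ be the unique complete path of $S$ satisfiable in $U$ on $\bar a$. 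Since $U\in K$ witnesses satisfiability of $\Pi(\tau)$ in $K$, the path $\tau$ is satisfiable in $K$, hence finite, and its isomorphic image $\tau'$ in $R(S)$ lies entirely in $C$ and ends at a terminal. I would then verify that $\tau'$ survives unchanged in $G_3$: every edge of $\tau'$ joins two nodes of $C$ and so is an edge of the induced subgraph $C$, and at no predicate node does $\tau'$ take the ``missing'' edge that was replaced by a fresh terminal, because that edge points outside $C$ whereas the successor of a node of $\tau'$ lies in $C$. Since $\Pi(\tau')=\Pi(\tau)$ is satisfied by $\bar a$ in $U$, the path $\tau'$ is the complete path of $R(S,K)$ satisfiable in $U$ on $\bar a$; being isomorphic to $\tau$, strong equivalence follows. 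I expect this check -- ensuring the satisfiable path never routes into one of the added dummy terminals -- to be the main technical obstacle.

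\textbf{Converse direction.} Suppose $K$ does not have compactness, so for some $n$ there is $\Phi\subseteq F_n(\sigma)$ finitely satisfiable in $K$ but not satisfiable in $K$; as any finite $\Phi$ would be its own witnessing finite subset, $\Phi$ is necessarily infinite, say $\Phi=\{\varphi_1,\varphi_2,\ldots\}$. I would build the ``sequential test'' scheme $S$ on input variables $X_n$ whose predicate nodes $v_1,v_2,\ldots$ are labeled by $\varphi_1,\varphi_2,\ldots$: the edge labeled $1$ leaves $v_i$ for $v_{i+1}$, and the edge labeled $0$ leaves $v_i$ for a fresh terminal node. For $U\in K$ and $\bar a\in A^n$, the satisfiable path is infinite precisely when $\bar a$ satisfies every $\varphi_i$; since $\Phi$ is not satisfiable in $K$ no such $\bar a$ exists, so every satisfiable path is finite and $S$ is total relative to $K$. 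If $K$ were program-saturated, $S$ would be strongly equivalent to a finite tree-scheme $S'$, whose complete paths have length bounded by its finite number of nodes $N$. Strong equivalence then forces, for every $U\in K$ and $\bar a$, the satisfiable path of $S$ on $\bar a$ to have at most $N$ nodes, whence some $\varphi_i$ with $i\le N$ fails at $\bar a$. This says the finite set $\{\varphi_1,\ldots,\varphi_N\}$ is satisfiable in no structure of $K$, contradicting the finite satisfiability of $\Phi$. Hence $K$ is not program-saturated, which establishes the contrapositive and completes the proof.
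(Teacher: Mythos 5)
Your proposal is correct and follows essentially the same route as the paper: the forward direction uses Lemma \ref{L6.1} to show $R(S,K)$ is a finite tree-scheme strongly equivalent to $S$ (you spell out the strong-equivalence check that the paper dismisses with ``Evidently''), and the converse uses the same sequential-test scheme as the paper's Fig.~\ref{Fig6.1}. Your only deviation is cosmetic: where the paper concludes by noting that $S$ has infinitely many pairwise non-isomorphic satisfiable paths while a finite tree-scheme has finitely many, you bound the path length by the node count $N$ of the hypothetical finite tree-scheme and contradict finite satisfiability of $\{\varphi_1,\ldots,\varphi_N\}$ directly --- the same argument in a slightly different wrapper.
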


\begin{proof}
Let $K$ have the property of compactness and $S$ be a total relative to $K$
scheme of the signature $\sigma $. Using Lemma \ref{L6.1}, we obtain that $%
R(S,K)$ is a finite tree-scheme of the signature $\sigma $. Evidently, the scheme $%
S $ is strongly equivalent relative to the class $K$ to the scheme $R(S,K)$.
Therefore $K$ is a program-saturated class.

Let $K$ have no the property of compactness. Then there exists a number $%
n\in \omega \setminus \{0\}$ and finitely satisfiable in the class $K$ set
of formulas $\Phi \subseteq F_{n}(\sigma )$, which is not satisfiable in the
class $K$. Let $\Phi =\{\varphi _{1},\varphi _{2},\ldots \}$. We denote by $%
S=(n,G)$ a scheme of the signature $\sigma $, where $G$ is the graph
depicted in Fig. \ref{Fig6.1}. Since the set $\Phi $ is not satisfiable in
the class $K$, the scheme $S$ is total relative to $K$, Taking into account
that the set $\Phi $ is finitely satisfiable in the class $K$, one can show
that the set of complete path of $S$ that are satisfiable in the class $K$
is infinite. Therefore there is no a finite tree-scheme of the signature $\sigma $,
which is strongly equivalent relative to the class $K$ to the scheme $S$.
Thus, the class $K$ is not program-saturated.
\end{proof}

\begin{figure}[th]
\centering
\includegraphics[width=0.60\textwidth]{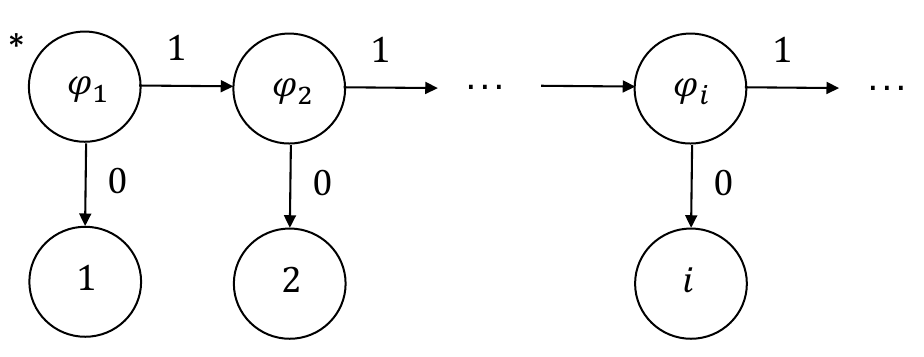}
\caption{Graph $G$}
\label{Fig6.1}
\end{figure}

A \emph{theory} of the signature $\sigma $ is a nonempty set $T$ of
sentences of the signature $\sigma $. A \emph{model} of the theory $T$ is a
structure of the signature $\sigma $ for which all sentences from $T$ are
true. The theory $T$ is called \emph{complete} if, for any sentence $\varphi
$ of the signature $\sigma $, either $\varphi \in T$ or $\lnot \varphi \in
T. $

\begin{definition}
A class $K$ of structures of the signature $\sigma $ is called \emph{%
axiomatizable} if there exists a theory $T$ of the signature $\sigma$ such that
the set of models of $T$ coincides with the class $K$.
\end{definition}

The next lemma follows directly from Proposition 2.2.7 \cite{ChangK92}.

\begin{lemma}
\label{L6.R.1} Let $T$ be a theory of the signature $\sigma $ and $\Phi
\subseteq F_{n}(\sigma )$. Then the following statements are equivalent:

{\rm (a)} $T$ has a model in which $\Phi $ is satisfiable.

{\rm (b)} Every finite subset of $\Phi $ is satisfiable in some model of $T$.
\end{lemma}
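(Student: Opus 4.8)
The plan is to reduce the statement about satisfiability of a set $\Phi$ of formulas with free variables to a statement about satisfiability of a set of \emph{sentences}, and then invoke the compactness theorem in the form of Proposition 2.2.7 \cite{ChangK92}. First I would expand $\sigma$ to a signature $\sigma'$ by adjoining $n$ new constant symbols $c_0,\ldots,c_{n-1}$, one for each input variable. For each formula $\varphi(x_0,\ldots,x_{n-1})\in\Phi$ I would form the sentence $\varphi^{*}$ of $\sigma'$ obtained by substituting $c_i$ for every free occurrence of $x_i$, and set $\Phi^{*}=\{\varphi^{*}:\varphi\in\Phi\}$, a set of sentences of $\sigma'$.

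The key observation is a correspondence between $\sigma$-structures carrying a distinguished tuple and $\sigma'$-structures: a model of $T\cup\Phi^{*}$ of signature $\sigma'$ is exactly a $\sigma$-model $U$ of $T$ together with an interpretation of $c_0,\ldots,c_{n-1}$ as a tuple $\bar a\in A^{n}$ such that $U\models\varphi(\bar a)$ for every $\varphi\in\Phi$. Here I use that $T$ consists of $\sigma$-sentences, so its truth value is unaffected by the added constants. In other words, a $\sigma'$-model of $T\cup\Phi^{*}$ is precisely a witness to statement (a). The implication (a) $\Rightarrow$ (b) is then immediate, since a single model $U$ and tuple $\bar a$ witnessing (a) also witness satisfiability of every finite subset of $\Phi$ in $U$.

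For (b) $\Rightarrow$ (a), I would apply compactness to the $\sigma'$-theory $T\cup\Phi^{*}$. By Proposition 2.2.7 \cite{ChangK92}, $T\cup\Phi^{*}$ has a model if and only if each of its finite subsets has a model. Any finite subset is contained in $T\cup\Phi_0^{*}$ for some finite $\Phi_0\subseteq\Phi$; by hypothesis (b) there is a $\sigma$-model of $T$ and a tuple satisfying $\Phi_0$, and interpreting the constants by that tuple yields a $\sigma'$-model of $T\cup\Phi_0^{*}$, hence of the given finite subset. Thus $T\cup\Phi^{*}$ is satisfiable, and by the correspondence above the resulting model yields statement (a).

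The only delicate point is bookkeeping rather than mathematical depth: one must check that substituting the new constants for the free variables faithfully translates satisfiability of formulas into truth of sentences, and that each finite subset of $T\cup\Phi^{*}$ reduces to a finite subset $\Phi_0\subseteq\Phi$ against which hypothesis (b) can be applied. Since the heavy lifting is carried out by the compactness theorem cited as Proposition 2.2.7, I expect no further obstacle; indeed, if that proposition is already stated for sets of formulas with free variables, the constants reduction above is subsumed and the lemma is essentially a rephrasing.
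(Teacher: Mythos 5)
Your proof is correct, but it takes a genuinely different (and more self-contained) route than the paper. The paper offers no argument at all: it observes that the lemma is, verbatim, the equivalence of statements (i) and (ii) in Proposition 2.2.7 of \cite{ChangK92}, which is already stated for sets of formulas with free variables (a model of $T$ realizing a set of formulas), so the lemma is a pure citation. What you prove is, in effect, that proposition itself: you reduce to sentences by adjoining fresh constants $c_0,\ldots,c_{n-1}$ and then apply the compactness theorem for sets of sentences. The bookkeeping is handled correctly: an expansion of a model of $T$ by interpretations of the new constants is still a model of $T$; truth of $\varphi^{*}$ in the expansion is equivalent to satisfaction of $\varphi$ at the interpreting tuple; and every finite subset of $T\cup\Phi^{*}$ lies inside some $T\cup\Phi_0^{*}$ with $\Phi_0\subseteq\Phi$ finite, so hypothesis (b) applies exactly where needed. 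One citation slip: the result your argument actually invokes is the sentence-level compactness theorem, which is a separate and more basic result in \cite{ChangK92}, not Proposition 2.2.7 --- the latter is the formula-level statement that your construction reconstructs, a possibility your own closing paragraph correctly anticipates. Your route buys independence from the precise formulation of the cited proposition, showing the lemma needs nothing beyond basic compactness; the paper's route buys brevity by leaning entirely on the reference.
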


\begin{theorem}
\label{T6.2} Any nonempty axiomatizable class $K$ of structures of the
signature $\sigma $ is program-saturated.
\end{theorem}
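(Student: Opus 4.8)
The plan is to reduce Theorem~\ref{T6.2} to Theorem~\ref{T6.1} by showing that every nonempty axiomatizable class $K$ has the property of compactness; once compactness is established, Theorem~\ref{T6.1} immediately yields that $K$ is program-saturated. So the entire burden of the proof is to verify the compactness property of Definition of compactness, and the tool for doing this is precisely Lemma~\ref{L6.R.1}.

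First I would fix the axiomatization. Since $K$ is axiomatizable, choose a theory $T$ of the signature $\sigma$ whose class of models is exactly $K$. Now fix an arbitrary $n \in \omega \setminus \{0\}$ and an arbitrary set $\Phi \subseteq F_{n}(\sigma)$ that is finitely satisfiable in $K$; the goal is to show $\Phi$ is satisfiable in $K$. The key observation is that, because the models of $T$ are exactly the structures in $K$, the phrase ``satisfiable in some model of $T$'' coincides with ``satisfiable in some structure of $K$,'' i.e., ``satisfiable in $K$'' in the sense of Section~\ref{S6.3}.

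Next I would translate the two notions through this coincidence. The hypothesis that $\Phi$ is finitely satisfiable in $K$ says exactly that every finite subset of $\Phi$ is satisfiable in some structure of $K$, which, by the coincidence of $K$ with the models of $T$, is statement~(b) of Lemma~\ref{L6.R.1}. Applying the lemma, we obtain statement~(a): $T$ has a model in which $\Phi$ is satisfiable. That model lies in $K$, so $\Phi$ is satisfiable in $K$. Since $n$ and $\Phi$ were arbitrary, $K$ has the property of compactness, and Theorem~\ref{T6.1} completes the argument.

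I do not expect a genuine obstacle here, as the real content is carried by Lemma~\ref{L6.R.1} (itself a restatement of the classical compactness theorem via Proposition~2.2.7 of \cite{ChangK92}). The only point demanding care is the bookkeeping around the identification of ``model of $T$'' with ``structure in $K$'': one must use both inclusions of the equality between $K$ and the model class of $T$, the ``$\supseteq$'' direction to feed finite satisfiability in $K$ into hypothesis~(b), and the ``$\subseteq$'' direction to conclude that the model produced by~(a) really belongs to $K$. With that identification made explicit, the proof is essentially a one-line invocation of Lemma~\ref{L6.R.1} followed by Theorem~\ref{T6.1}.
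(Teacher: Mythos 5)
Your proof is correct and takes exactly the same route as the paper: show that $K$ has the property of compactness by applying Lemma~\ref{L6.R.1} through the identification of $K$ with the model class of an axiomatizing theory $T$, and then invoke Theorem~\ref{T6.1}. The paper's own proof is simply a terser version of your argument, leaving implicit the bookkeeping about both directions of the equality between $K$ and the models of $T$ that you spell out.
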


\begin{proof}
Let $K$ be a nonempty axiomatizable class of structures of the signature $%
\sigma $, $n\in \omega \setminus \{0\}$, $\Phi \subseteq F_{n}(\sigma )$, and
the set of formulas $\Phi $ be finitely satisfiable in the class $K$. Using
Lemma \ref{L6.R.1}, we obtain that the set $\Phi $ is satisfiable in the
class $K$. Therefore the class $K$ has the property of compactness. By
Theorem \ref{T6.1}, the class $K$ is program-saturated.
\end{proof}

We now consider examples of axiomatizable classes of structures (see pp.
38-41 \cite{ChangK92}):

\begin{itemize}
\item Classes of Boo\-lean algebras, atomic Boolean algebras, and atomless
Boo\-lean algebras.

\item Classes of groups, abelian groups, abelian groups with all elements of
order $p$, where $p$ is a prime, torsion-free abelian groups.

\item Classes of commutative rings with unit, fields, fields of
characteristic $p$, where $p$ is a prime, fields of characteristic zero,
algebraically closed fields, real closed fields.
\end{itemize}

\section{Program-saturated Structures \label{S6.4}}

This section is devoted to the study of individual structures, each of which
forms a class that has the property of compactness. Such structures include,
in particular, all models of cardinality $\alpha $ of  $\alpha $%
-categorical theories. An example of such a model is the field of complex
numbers.

\begin{definition}
Let $U$ be a structure of the signature $\sigma $. The structure $U$ will be
called \emph{program-saturated} if the class $\{U\}$ is program-saturated.
\end{definition}

We denote by $\mathrm{Th}(U)$ the \emph{theory} of the structure $U$: the
set of all sentences of the signature $\sigma $ that are true in $U$. It is
clear that the theory $\mathrm{Th}(U)$ is complete. Let $n\in \omega
\setminus \{0\}$ and $\Phi \subseteq F_{n}(\sigma )$.

\begin{definition}
We will say that the
set $\Phi $ is \emph{consistent} with the theory $\mathrm{Th}(U)$ if there
exists a model of the theory $\mathrm{Th}(U)$ in which the set $\Phi $ is
satisfiable.
\end{definition}

\begin{lemma}
\label{L6.2} A set $\Phi \subseteq F_{n}(\sigma )$ is consistent with the
theory $\mathrm{Th}(U)$ if and only if the set $\Phi $ is finitely
satisfiable in the class $\{U\}$.
\end{lemma}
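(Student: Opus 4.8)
The plan is to prove the two implications separately, using the completeness of $\mathrm{Th}(U)$ for one direction and the previously established Lemma~\ref{L6.R.1} for the other.

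For the forward direction, I would assume that $\Phi$ is consistent with $\mathrm{Th}(U)$, so there is a model $M\models \mathrm{Th}(U)$ and a tuple $\bar b\in M^{n}$ with $M\models \varphi(\bar b)$ for every $\varphi\in\Phi$. To establish finite satisfiability in the class $\{U\}$, I would fix an arbitrary finite subset $\{\varphi_{1},\ldots,\varphi_{k}\}\subseteq\Phi$ and pass to its existential closure $\psi := \exists x_{0}\cdots\exists x_{n-1}(\varphi_{1}\wedge\cdots\wedge\varphi_{k})$, which is a \emph{sentence} of the signature $\sigma$ and is true in $M$ (witnessed by $\bar b$). Because $\mathrm{Th}(U)$ is complete and $M$ is one of its models, $M$ and $U$ satisfy exactly the same sentences; in particular $U\models\psi$, which means the chosen finite subset is satisfiable in $U$. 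Since the finite subset was arbitrary, $\Phi$ is finitely satisfiable in $\{U\}$.

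For the reverse direction, I would assume that $\Phi$ is finitely satisfiable in the class $\{U\}$, i.e., every finite subset of $\Phi$ is satisfiable in $U$ itself. As $U$ is a model of $\mathrm{Th}(U)$, this immediately says that every finite subset of $\Phi$ is satisfiable in \emph{some} model of $\mathrm{Th}(U)$, namely $U$. This is exactly statement (b) of Lemma~\ref{L6.R.1} with $T=\mathrm{Th}(U)$; invoking the equivalence (a)$\Leftrightarrow$(b) yields statement (a), that $\mathrm{Th}(U)$ has a model in which $\Phi$ is satisfiable, which is precisely the assertion that $\Phi$ is consistent with $\mathrm{Th}(U)$.

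The only genuinely subtle step is the forward direction: transferring satisfiability of a finite subset from an \emph{arbitrary} model $M$ back to the fixed structure $U$. This is where the completeness of $\mathrm{Th}(U)$ is indispensable, realized through the existential-closure sentence $\psi$, since without completeness $M$ could satisfy an existential statement that fails in $U$. The reverse direction, by contrast, is essentially a direct application of Lemma~\ref{L6.R.1} once one observes that $U$ itself serves as the required model of $\mathrm{Th}(U)$.
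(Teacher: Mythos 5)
Your proposal is correct and follows essentially the same route as the paper: the direction from consistency to finite satisfiability uses completeness of $\mathrm{Th}(U)$ via the existential closure $\exists x_{0}\cdots \exists x_{n-1}(\varphi _{1}\wedge \cdots \wedge \varphi _{k})$, and the converse is a direct application of Lemma~\ref{L6.R.1} with $T=\mathrm{Th}(U)$. Your write-up only spells out in more detail why the existentially closed sentence transfers from the model $M$ to $U$, which the paper leaves implicit.
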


\begin{proof}
Let $\Phi $ be finitely satisfiable in the class $\{U\}$. Using Lemma \ref%
{L6.R.1}, we obtain that $\Phi $ is consistent with the theory $\mathrm{Th}%
(U)$.

Let $\Phi $ be consistent with the theory $\mathrm{Th}(U)$. Let $\varphi
_{1},\ldots ,\varphi _{m}\in \Phi $. Using  the
completeness of the theory $\mathrm{Th}(U)$, we obtain that the sentence $%
\exists x_{0}\cdots \exists x_{n-1}(\varphi _{1}\wedge \cdots \wedge \varphi
_{m})$ belongs to the theory $\mathrm{Th}(U)$. Therefore the set of formulas
$\{\varphi _{1},\ldots ,\varphi _{m}\}$ is satisfiable in $U$. Thus, the set
$\Phi $ is finitely satisfiable in the class $\{U\}$.
\end{proof}

Let $U$ be a structure of the signature $\sigma $ with the universe $A$ and $%
Y\subseteq A$. We denote by $\sigma _{Y}$ the signature obtained from $%
\sigma $ by adding constant symbol  $c_{a}$ for any element $a\in Y$. We denote by $%
U_{Y}$ the expansion of the structure $U$ to the signature $\sigma _{Y}$
such that each new constant symbol $c_{a}$ is interpreted as the element $a$%
.

\begin{definition}
Let $\alpha $ be a cardinal. The structure $U$ is called $\alpha $-\emph{%
saturated} if, for any set $Y\subseteq A$ with cardinality less than $\alpha
$, for any set of formulas $\Phi \subseteq F_{1}(\sigma _{Y})$, which is
consistent with the theory $\mathrm{Th}(U_{Y})$, the set $\Phi $ is
satisfiable in $U_{Y}$. The structure $U$ is called \emph{saturated} if it is
$|A|$-saturated.
\end{definition}

\begin{definition}
Let $\alpha $ be a cardinal. A theory $T$ of the signature $\sigma $ is
called $\alpha $-\emph{categorical} if $T$ has a model of cardinality $%
\alpha $ and every two models of $T$ of the cardinality $\alpha $ are
isomorphic.
\end{definition}

The next lemma follows directly from Proposition 2.3.6 \cite{ChangK92}.

\begin{lemma}
\label{L6.R.2} Let $U$ be an $\omega $-saturated structure of the signature $%
\sigma $ with the universe $A$ and $n \in \omega \setminus \{0\}$. Then, for each finite $Y\subseteq A$, each
set of formulas $\Phi \subseteq F_{n}(\sigma _{Y})$ consistent with the
theory $\mathrm{Th}(U_{Y})$ is satisfiable in $U_{Y}$.
\end{lemma}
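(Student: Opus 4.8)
The plan is to prove the statement by induction on $n$, reducing the realization of an $n$-variable type to the realization of $1$-variable types, which is exactly what $\omega$-saturation provides. Throughout, I would rely on Lemma~\ref{L6.2}: since $\mathrm{Th}(U_Y)$ is complete (being the theory of a structure) and the proof of Lemma~\ref{L6.2} uses only completeness together with Lemma~\ref{L6.R.1}, the same equivalence holds with $U_Y$ in place of $U$. Thus a set $\Phi\subseteq F_{m}(\sigma_Y)$ is consistent with $\mathrm{Th}(U_Y)$ if and only if every finite subset of $\Phi$ is satisfiable in $U_Y$ itself. This lets me argue entirely inside $U$ (whose universe is $A$) rather than passing to auxiliary models.

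For the base case $n=1$, the finiteness of $Y$ gives $|Y|<\omega$, so the definition of $\omega$-saturation applies directly and any $\Phi\subseteq F_1(\sigma_Y)$ consistent with $\mathrm{Th}(U_Y)$ is satisfiable in $U_Y$.

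For the inductive step, assume the claim for $n-1$ and every finite parameter set, and let $\Phi\subseteq F_n(\sigma_Y)$ be consistent with $\mathrm{Th}(U_Y)$. First I would peel off the variable $x_0$ by forming
\[
\Phi_0=\left\{\exists x_1\cdots\exists x_{n-1}\bigwedge_{i=1}^{m}\varphi_i : m\ge 1,\ \varphi_1,\dots,\varphi_m\in\Phi\right\}\subseteq F_1(\sigma_Y).
\]
Any finite subset of $\Phi_0$ refers to finitely many formulas of $\Phi$; a tuple satisfying those in $U_Y$ (which exists by finite satisfiability of $\Phi$) witnesses the corresponding existential formulas, so $\Phi_0$ is finitely satisfiable, hence consistent with $\mathrm{Th}(U_Y)$. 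By the base case there is $a_0\in A$ realizing $\Phi_0$ in $U_Y$. Next I would set $Y'=Y\cup\{a_0\}$ (still finite) and substitute the constant $c_{a_0}$ for $x_0$, renaming $x_1,\dots,x_{n-1}$ as $x_0,\dots,x_{n-2}$, to obtain a set $\Phi'\subseteq F_{n-1}(\sigma_{Y'})$. The induction hypothesis applied to $Y'$ and $n-1$ then yields $a_1,\dots,a_{n-1}\in A$ realizing $\Phi'$ in $U_{Y'}$, and the tuple $(a_0,\dots,a_{n-1})$ realizes $\Phi$ in $U_Y$.

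The main obstacle is verifying that the substituted type $\Phi'$ is consistent with the enlarged theory $\mathrm{Th}(U_{Y'})$: one must check that realizing the single-variable projection $\Phi_0$ by $a_0$ suffices to make every finite fragment of $\Phi'$ simultaneously satisfiable over the new constant $c_{a_0}$. This is precisely where the translation via Lemma~\ref{L6.2} pays off, since it reduces the required condition to finite satisfiability inside $U_{Y'}$, which follows immediately from the choice of $a_0$ as a realization of the existentially quantified conjunctions collected in $\Phi_0$.
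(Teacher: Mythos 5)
Your proof is correct, but it takes a different route from the paper: the paper does not prove Lemma~\ref{L6.R.2} at all, obtaining it as a direct citation of Proposition 2.3.6 of \cite{ChangK92}, whereas you give a self-contained inductive argument that is, in essence, the textbook proof behind that cited proposition. Your key steps all check out: the transfer of Lemma~\ref{L6.2} to $U_Y$ is legitimate (the lemma applies verbatim with $\sigma_Y$ in place of $\sigma$, since $\sigma_Y$ is still finite or countable and $\mathrm{Th}(U_Y)$ is complete), so consistency with $\mathrm{Th}(U_Y)$ reduces to finite satisfiability in $U_Y$; the base case $n=1$ is exactly the definition of $\omega$-saturation; and in the inductive step the crucial point --- that $\Phi_0$ must consist of existentially quantified \emph{finite conjunctions} of members of $\Phi$, not merely of single formulas, so that a realization $a_0$ of $\Phi_0$ makes every finite fragment of the substituted set $\Phi'$ satisfiable in $U_{Y'}$ --- is handled correctly, and this is precisely what licenses applying the induction hypothesis to the still-finite parameter set $Y'=Y\cup \{a_0\}$. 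What the paper's approach buys is brevity; what yours buys is self-containment: the lemma becomes a consequence of the paper's own Lemmas~\ref{L6.R.1} and \ref{L6.2} together with the definition of $\omega$-saturation, with no external model-theoretic machinery beyond them.
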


Using statement (b) from the proof of Theorem 2.3.13 \cite{ChangK92}, we
obtain the following lemma.

\begin{lemma}
\label{L6.R.3} Let $T$ be a complete $\omega $-categorical theory of the
signature $\sigma $. Then $T$ has a countable $\omega $-saturated model.
\end{lemma}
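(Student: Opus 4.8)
The plan is to build on the Ryll-Nardzewski characterization recorded as Theorem 2.3.13 of \cite{ChangK92}, whose statement (b) supplies, for an $\omega$-categorical theory, that for each $n$ there are only finitely many complete $n$-types; equivalently, every complete type over the empty set is isolated. Since $T$ is $\omega$-categorical it has a model of cardinality $\omega$, so I would fix one such countable model $U=(A,I)$ and set out to verify that $U$ satisfies the definition of an $\omega$-saturated structure.

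To do so I would take an arbitrary finite $Y\subseteq A$ together with an arbitrary $\Phi\subseteq F_{1}(\sigma_{Y})$ that is consistent with $\mathrm{Th}(U_{Y})$, and show that $\Phi$ is satisfiable in $U_{Y}$. First I would extend $\Phi$ to a complete $1$-type $p$ of the complete theory $\mathrm{Th}(U_{Y})$, which is possible because some model of $\mathrm{Th}(U_{Y})$ has an element realizing $\Phi$, and the full type of that element extends $\Phi$. The key point is that passing from $U$ to $U_{Y}$ adjoins only finitely many constant symbols, and this operation preserves the finiteness of the type spaces: finitely many complete $(|Y|+1)$-types of $T$ over the empty set yield finitely many complete $1$-types of $\mathrm{Th}(U_{Y})$ over the empty set. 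Applying statement (b) of Theorem 2.3.13 to the expanded theory $\mathrm{Th}(U_{Y})$, I conclude that every such $1$-type, and in particular $p$, is isolated.

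An isolated type is realized in every model of its theory, so $p$ is realized in $U_{Y}$ itself; any realization of $p$ also realizes $\Phi$, whence $\Phi$ is satisfiable in $U_{Y}$. Since $Y$ and $\Phi$ were arbitrary, $U$ is $\omega$-saturated, and it is countable by construction, which is the desired conclusion.

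The step I expect to be the main obstacle is the bridge between $\omega$-categoricity, which constrains only types over the empty set, and $\omega$-saturation, which concerns types over arbitrary finite parameter sets. The resolution is the observation that adjoining finitely many constants to the signature keeps every complete type space finite, so the Ryll-Nardzewski finiteness condition transfers verbatim from $T$ to each expansion $\mathrm{Th}(U_{Y})$; this is precisely the content that statement (b) of the cited proof is designed to supply.
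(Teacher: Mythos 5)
Your proposal is correct and follows essentially the same route as the paper: the paper's proof is simply a citation of statement (b) from the proof of Theorem 2.3.13 in \cite{ChangK92} (the Ryll--Nardzewski machinery), and your argument is a faithful unpacking of exactly that content --- finiteness of type spaces survives adjoining finitely many constants, every type in a finite type space is isolated, and isolated types are realized in every model of a complete theory. The only detail worth adding is the one-line check that $T$ has no finite models (it is complete and has a countable model), which is a hypothesis of the cited theorem.
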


The next lemma follows directly from Corollary 7.1.8 \cite{ChangK92}.

\begin{lemma}
\label{L6.R.4} Let $\alpha $ be an uncountable cardinal and $T$ be a complete
theory of the signature $\sigma $, which has infinite models. Then $T$ is $%
\alpha $-categorical if and only if every model of $T$ of cardinality $%
\alpha $ is saturated.
\end{lemma}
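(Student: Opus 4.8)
The plan is to derive the lemma from the standard theory of saturated and special models of a complete theory, as developed in \cite{ChangK92}, treating the two implications separately. A preliminary remark makes the statement non-vacuous: since $\sigma$ is at most countable and $T$ has infinite models, the downward and upward L\"{o}wenheim--Skolem theorems give a model of $T$ of every infinite cardinality, in particular of the uncountable cardinality $\alpha$. So the quantifier ``every model of $T$ of cardinality $\alpha$'' ranges over a nonempty collection.

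For the easier implication --- if every model of $T$ of cardinality $\alpha$ is saturated, then $T$ is $\alpha$-categorical --- I would invoke the uniqueness theorem for saturated models: any two saturated models of a complete theory having the same cardinality are isomorphic. This is proved by a transfinite back-and-forth argument that matches elements one at a time, using $\alpha$-saturation to realize, at each stage, the type of the next element over the fewer-than-$\alpha$ elements already matched. Since $T$ is complete and all its models of cardinality $\alpha$ are saturated of that same cardinality, they are pairwise isomorphic, which is exactly $\alpha$-categoricity.

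The substantive implication is the converse: assuming $\alpha$-categoricity, show that the (unique up to isomorphism) model of cardinality $\alpha$ is saturated. Here I would use the existence and uniqueness of special models. The theory $T$ possesses a special model $N$ of cardinality $\alpha$, presented as the union of an elementary chain $(N_{\beta})_{\beta<\alpha}$ with each $N_{\beta}$ being $\beta^{+}$-saturated, and special models of a complete theory are unique up to isomorphism in each cardinality in which they exist. By $\alpha$-categoricity, every model of cardinality $\alpha$ is isomorphic to $N$, hence special. It then remains to check that such a special model realizes every type consistent with $\mathrm{Th}(N_{Y})$ over each parameter set $Y$ with $|Y|<\alpha$, i.e.\ that it is $\alpha$-saturated. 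When $\alpha$ is regular this is routine: a set $Y$ with $|Y|<\alpha$ lies inside a single $N_{\beta}$ with $|Y|\le\beta<\alpha$, and the $\beta^{+}$-saturation of $N_{\beta}$ already realizes the required types, which then persist in $N$ since $N_{\beta}\preceq N$.

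The hard part, and the main obstacle, is precisely this last verification for \emph{singular} $\alpha$, where a special model need not be saturated and the single-$N_{\beta}$ argument fails. The resolution is that an $\alpha$-categorical theory with infinite models is severely constrained --- it admits only few types over any parameter set --- so that genuine saturated models can be constructed in every uncountable cardinality within \textrm{ZFC}, after which $\alpha$-categoricity forces the unique model of power $\alpha$ to coincide with the saturated one. This stability-type input, together with the special-model machinery, is exactly what Corollary 7.1.8 of \cite{ChangK92} packages, and it is the step I expect to require the most care; once it is granted, both directions close as above.
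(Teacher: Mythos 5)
The paper offers no argument for this lemma at all: it is quoted verbatim as Corollary 7.1.8 of \cite{ChangK92}, which is exactly the Morley-type theorem that, for uncountable $\alpha$ and complete $T$ with infinite models, $\alpha$-categoricity is equivalent to all models of power $\alpha$ being saturated. Your backward direction (all models of power $\alpha$ saturated $\Rightarrow$ $\alpha$-categorical), via uniqueness up to isomorphism of saturated models of a complete theory in a fixed cardinality, is correct and standard. The problem is the forward direction. Your argument there rests on the unconditional claim that ``the theory $T$ possesses a special model $N$ of cardinality $\alpha$''. This is false in ZFC for arbitrary uncountable $\alpha$: special models of power $\alpha$ (which, for successor $\alpha$, are the same as saturated models of power $\alpha$) exist for all complete theories only when $\alpha$ satisfies a cardinal-arithmetic condition, essentially $\alpha=\sup\{2^{\beta}:\beta<\alpha\}$. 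For instance, if $2^{\omega}>\omega_{1}$, the theory of dense linear orders has no saturated, hence no special, model of power $\omega_{1}$: realizing the $2^{\omega}$ many cut types over a countable suborder already forces cardinality at least $2^{\omega}$. So the difficulty is also mislocated: it is not confined to singular $\alpha$, since the existence of the special model fails already at regular cardinals such as $\omega_{1}$ under the negation of CH.

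Your proposed repair --- that $\alpha$-categoricity forces few types ($\omega$-stability), which in turn yields saturated models in every uncountable cardinality --- is indeed the real mathematical content of the forward direction, but you do not prove it; you explicitly delegate it to Corollary 7.1.8 of \cite{ChangK92}, which is the very statement being proved. As a self-contained argument the forward direction is therefore circular, and as a citation it does nothing beyond the paper's own one-line reference, while the special-model detour layered on top of it is both unnecessary and, as stated, incorrect. A genuine proof needs the Morley chain: $\alpha$-categoricity implies $\omega$-stability (via Ehrenfeucht--Mostowski models, which realize few types over countable sets, against a model realizing many); $\omega$-stability gives a saturated model of power $\alpha$ for every uncountable $\alpha$, singular included; and categoricity then identifies every model of power $\alpha$ with this saturated one.
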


\begin{theorem}
\label{T6.3} Let $U$ be a structure of the signature $\sigma $ and $\alpha$ be a cardinal.

{\rm (a)} If $U$ is an $\omega $-saturated structure, then $U$ is a
program-saturated structure.

{\rm (b)} If $U$ is a model of the cardinality $\alpha $ of an $\alpha $%
-categorical theory $T$, then $U$ is a program-saturated structure.
\end{theorem}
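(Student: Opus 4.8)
The plan is to reduce both parts of Theorem~\ref{T6.3} to the compactness characterization of Theorem~\ref{T6.1} by showing that the singleton class $\{U\}$ has the property of compactness. By Definition of the program-saturated structure, $U$ is program-saturated exactly when $\{U\}$ is a program-saturated class, so it suffices to verify that for every $n\in\omega\setminus\{0\}$ and every set $\Phi\subseteq F_{n}(\sigma)$ that is finitely satisfiable in $\{U\}$, the set $\Phi$ is satisfiable in $\{U\}$, i.e.\ in $U$ itself. The key bridge is Lemma~\ref{L6.2}, which tells us that finite satisfiability in $\{U\}$ is equivalent to consistency of $\Phi$ with the complete theory $\mathrm{Th}(U)$. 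So in both cases the task becomes: given $\Phi$ consistent with $\mathrm{Th}(U)$, show $\Phi$ is satisfiable in $U$.

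For part (a), I would invoke Lemma~\ref{L6.R.2} directly with the empty parameter set $Y=\varnothing$. Since $U$ is $\omega$-saturated and $\varnothing$ is finite, Lemma~\ref{L6.R.2} guarantees that any $\Phi\subseteq F_{n}(\sigma_{\varnothing})=F_{n}(\sigma)$ consistent with $\mathrm{Th}(U_{\varnothing})=\mathrm{Th}(U)$ is satisfiable in $U$. Combined with Lemma~\ref{L6.2}, this shows every finitely satisfiable $\Phi$ in $\{U\}$ is satisfiable in $U$, so $\{U\}$ has the property of compactness, and Theorem~\ref{T6.1} yields that $\{U\}$ is program-saturated, hence $U$ is program-saturated.

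For part (b), the strategy is to split on whether the model is finite, countable, or uncountable, and in each case exhibit $\omega$-saturation so that part (a) applies. If $U$ is finite, then $F_{n}(\sigma)$-satisfiability is trivially controlled and compactness holds directly (a finitely satisfiable set reduces to a single satisfiable formula via the finiteness of the relevant quantifier-free type space over $U$). If $\alpha=\omega$, then $T$ is a complete $\omega$-categorical theory (completeness follows because $U$ is a model, so $T\subseteq\mathrm{Th}(U)$ and $\omega$-categoricity of $\mathrm{Th}(U)$ transfers), and by Lemma~\ref{L6.R.3} it has a countable $\omega$-saturated model; since all countable models of an $\omega$-categorical theory are isomorphic, $U$ itself is $\omega$-saturated, and part (a) finishes the argument. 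If $\alpha$ is uncountable, Lemma~\ref{L6.R.4} tells us that $U$ is saturated, hence in particular $\omega$-saturated, and again part (a) applies.

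The main obstacle I anticipate is the bookkeeping in the uncountable case of part~(b): Lemma~\ref{L6.R.4} requires $T$ to be \emph{complete} and to have \emph{infinite} models, so I must first argue that the theory governing $U$ can be taken complete (replacing $T$ by $\mathrm{Th}(U)$, which is complete by the remark preceding Lemma~\ref{L6.2} and remains $\alpha$-categorical since it shares the cardinality-$\alpha$ models with $T$) and that a model of uncountable cardinality forces infinite models. The finite case must also be handled separately, since the saturation lemmas are phrased for infinite or $\omega$-saturated structures; fortunately a finite structure makes every finitely satisfiable set of formulas satisfiable outright, because only finitely many distinct tuples $\bar a\in A^{n}$ exist and finite satisfiability over a finite witness set collapses to genuine satisfiability.
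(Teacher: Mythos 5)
Your proposal is correct and follows essentially the same route as the paper's own proof: part (a) via Lemma~\ref{L6.R.2}, Lemma~\ref{L6.2}, and the compactness criterion of Theorem~\ref{T6.1}, and part (b) by splitting into finite, countable, and uncountable cases, passing to the complete theory $\mathrm{Th}(U)$ and applying Lemmas~\ref{L6.R.3} and~\ref{L6.R.4} to deduce $\omega$-saturation and then invoking part (a). The only differences are that you spell out details the paper labels ``evidently'' (the finite case and the transfer of $\alpha$-categoricity and completeness from $T$ to $\mathrm{Th}(U)$), which is fine.
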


\begin{proof}
(a) Let $U$ be an $\omega $-saturated structure. Using Lemma \ref{L6.R.2},
we obtain that, for any $n\in \omega \setminus \{0\}$, any consistent with
the theory $\mathrm{Th}(U)$ set of formulas $\Phi \subseteq F_{n}(\sigma )$
is satisfiable in $U$. Using Lemma \ref{L6.2}, we obtain that the class $%
\{U\}$ has the property of compactness. From here and from Theorem \ref{T6.1}
it follows that $U$ is a program-saturated structure.

(b) Let $U$ be a model of the cardinality $\alpha $ of an $\alpha $%
-categorical theory $T$. If $U$ is a finite structure, then evidently, $U$
is a program-saturated structure.

Let $\alpha $ be an infinite cardinal. Evidently, $\mathrm{Th}(U)$ is an $%
\alpha $-categorical complete theory. If $U$ is a countable structure, then
using Lemma \ref{L6.R.3} and the completeness of the theory $\mathrm{Th}(U)$%
, we obtain that $U$ is an $\omega $-saturated structure. If $\alpha $ is an
uncountable cardinal, then using Lemma \ref{L6.R.4} and the completeness of
the theory $\mathrm{Th}(U)$, we obtain that $U$ is saturated and
consequently $\omega $-saturated structure. Using the statement (a) of the
theorem, we obtain that $U$ is a program-saturated structure.
\end{proof}

\begin{corollary}
\label{C6.1} The following structures (see definitions in \cite{ChangK92}),
each of which for some cardinal $\alpha $ is a model of the
cardinality $\alpha $ of an $\alpha $-categorical theory, are
program-saturated:

\begin{itemize}
\item Countable atomless Boolean algebra (see Proposition 1.4.5 \cite%
{ChangK92}).

\item Abelian group with all elements of order $p$, where $p$ is a prime number (see
Proposition 1.4.7 \cite{ChangK92}).

\item Uncountable divisible torsion-free abelian group (see Proposition
1.4.8 \cite{ChangK92}), in particular, additive group of real numbers $(%
\mathbb{R}
;+,0)$.

\item Uncountable algebraically closed field of the characteristic zero or $%
p $, where $p$ is a prime number (see Proposition 1.4.10 \cite{ChangK92}),
in particular, the field of complex numbers $(%
\mathbb{C}
;+,\cdot ,0,1)$.
\end{itemize}
\end{corollary}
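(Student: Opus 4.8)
The plan is to derive the corollary directly from part (b) of Theorem \ref{T6.3}. That theorem guarantees that every model of cardinality $\alpha$ of an $\alpha$-categorical theory is program-saturated, so for each of the four listed structures it suffices to exhibit a cardinal $\alpha$ for which the structure has cardinality $\alpha$ and its complete theory is $\alpha$-categorical. The propositions cited in the statement from \cite{ChangK92} supply exactly the categoricity facts required; the remaining work consists only in choosing $\alpha$ correctly and verifying that the cardinality of the structure matches it.

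First I would dispatch the two $\omega$-categorical cases. For a countable atomless Boolean algebra, Proposition 1.4.5 of \cite{ChangK92} gives that the theory of atomless Boolean algebras is $\omega$-categorical; since the structure is countable, I take $\alpha=\omega$ and apply Theorem \ref{T6.3}(b). For an abelian group in which every element has order $p$, that is, an $\mathbb{F}_p$-vector space, Proposition 1.4.7 yields categoricity in each infinite power, so $\alpha$ equal to the (infinite) cardinality of the group serves, and Theorem \ref{T6.3}(b) again applies.

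Next I would treat the two cases in which the theory is \emph{not} $\omega$-categorical, so that the uncountability hypothesis becomes essential. The theory of divisible torsion-free abelian groups (the $\mathbb{Q}$-vector spaces) and the theory of algebraically closed fields of a fixed characteristic each possess infinitely many pairwise non-isomorphic countable models, separated by dimension or by transcendence degree; hence $\alpha=\omega$ cannot be used. By Propositions 1.4.8 and 1.4.10 of \cite{ChangK92}, however, both theories are categorical in every uncountable power. Thus for an uncountable structure of either kind I take $\alpha$ to be its uncountable cardinality and invoke Theorem \ref{T6.3}(b). In particular $(\mathbb{R};+,0)$ and $(\mathbb{C};+,\cdot,0,1)$ both have cardinality $2^{\aleph_0}$, which is uncountable, so the corollary applies to them with $\alpha=2^{\aleph_0}$.

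The only genuine subtlety, and the step I would check most carefully, is the matching of cardinalities in the uncountable cases: one must confirm that the ambient structure really has an uncountable cardinality and that this cardinality lies among the powers in which the cited theory is categorical. Because the relevant theories are categorical in \emph{all} uncountable powers, any uncountable cardinality suffices, and no delicate cardinal arithmetic is needed beyond the observation that $|\mathbb{R}|=|\mathbb{C}|=2^{\aleph_0}>\aleph_0$. With these choices of $\alpha$ in place, each of the four items is an immediate instance of Theorem \ref{T6.3}(b).
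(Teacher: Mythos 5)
Your proposal is correct and follows essentially the same route as the paper: the corollary is an immediate application of Theorem \ref{T6.3}(b), with the cited propositions from \cite{ChangK92} supplying the $\omega$-categoricity of atomless Boolean algebras and abelian groups of exponent $p$, and the uncountable categoricity of divisible torsion-free abelian groups and algebraically closed fields of fixed characteristic. Your additional remarks---that the latter two theories fail to be $\omega$-categorical (hence the uncountability hypothesis) and that $|\mathbb{R}|=|\mathbb{C}|=2^{\aleph_0}>\aleph_0$---are exactly the cardinality checks the paper leaves implicit.
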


\section{Elementary Extensions \label{S6.5}}

In this section, we study the possibility of  elementary extension of the
structure to a structure that is program-saturated. We show that this is always
possible and find the minimum cardinality of such an extension.

\begin{definition}
Let $n\in \omega \setminus \{0\}$ and $F_{n}(\sigma )=\{\varphi _{i}:i\in
\omega \}$. $n$-\emph{Type} of the structure $U$ of the signature $\sigma $
is any finitely satisfiable in $\{U\}$ set of formulas of the form $\{\varphi
_{i}^{\delta _{i}}:i\in \omega \}$, where, for any $i\in \omega $, $\delta
_{i}\in \{0,1\}$ and $\varphi _{i}^{1}=\varphi _{i}$, $\varphi
_{i}^{0}=\lnot \varphi _{i}$.
\end{definition}

\begin{lemma}
\label{L6.3} A structure $U$ of the signature $\sigma $ is program-saturated
if and only if, for any $n\in \omega \setminus \{0\}$, any $n$-type of the
structure $U$ is satisfiable in $U$.
\end{lemma}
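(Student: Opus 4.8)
The plan is to route everything through Theorem \ref{T6.1}, which identifies program-saturation of the singleton class $\{U\}$ with its property of compactness. It therefore suffices to prove that $\{U\}$ has the property of compactness if and only if, for every $n \in \omega \setminus \{0\}$, every $n$-type of $U$ is satisfiable in $U$. I would prove the two implications separately.

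The forward implication is essentially immediate. Since $F_n(\sigma)$ is closed under negation, every formula $\varphi_i^{\delta_i}$ lies in $F_n(\sigma)$, so any $n$-type $\Psi$ is a subset of $F_n(\sigma)$ that is, by definition, finitely satisfiable in $\{U\}$. If $\{U\}$ has the property of compactness, then $\Psi$ is satisfiable in $U$, as required.

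For the converse, assume that every $n$-type of $U$ is satisfiable in $U$, and let $\Phi \subseteq F_n(\sigma)$ be finitely satisfiable in $\{U\}$. The key step is to extend $\Phi$ to an $n$-type $\Psi$ with $\Phi \subseteq \Psi$ by a Lindenbaum-style construction along the enumeration $F_n(\sigma) = \{\varphi_i : i \in \omega\}$. I would define the bits $\delta_i$ inductively so as to maintain the invariant that $\Phi \cup \{\varphi_0^{\delta_0}, \ldots, \varphi_i^{\delta_i}\}$ is finitely satisfiable in $\{U\}$: at step $i$, if $\varphi_i \in \Phi$ I set $\delta_i = 1$, and otherwise I choose $\delta_i$ so as to preserve finite satisfiability. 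The crucial point, which I expect to be the main obstacle, is to show that at least one of the two choices always works. If both $\Phi \cup \{\ldots, \varphi_i\}$ and $\Phi \cup \{\ldots, \lnot\varphi_i\}$ failed to be finitely satisfiable, there would be a single finite subset $\Delta$ of the current set witnessing both failures; taking a tuple $\bar{a}$ on which $\Delta$ is satisfiable in $U$ and observing that $U$ decides $\varphi_i$ on $\bar{a}$ (either $U \models \varphi_i(\bar{a})$ or $U \models \lnot\varphi_i(\bar{a})$) yields a contradiction. The rule $\delta_i = 1$ for $\varphi_i \in \Phi$ is consistent with the invariant, since adjoining a formula already present in $\Phi$ changes nothing.

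Once the construction is complete, I would verify that $\Psi = \{\varphi_i^{\delta_i} : i \in \omega\}$ is an $n$-type: it decides every formula by construction, and it is finitely satisfiable because any finite subset is contained in some initial segment $\{\varphi_0^{\delta_0}, \ldots, \varphi_m^{\delta_m}\}$, which is satisfiable in $U$ by the invariant. The rule for $\varphi_i \in \Phi$ guarantees $\Phi \subseteq \Psi$. By hypothesis $\Psi$ is satisfiable in $U$, and since $\Phi \subseteq \Psi$, so is $\Phi$. This establishes the property of compactness for $\{U\}$, and Theorem \ref{T6.1} then yields that $U$ is program-saturated, completing both directions.
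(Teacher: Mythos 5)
Your proposal is correct, and the forward direction coincides with the paper's (both are an immediate application of Theorem \ref{T6.1}, since an $n$-type is by definition a finitely satisfiable subset of $F_n(\sigma)$). The converse, however, takes a genuinely different route. The paper extends $\Phi$ to an $n$-type \emph{semantically}: it invokes Lemma \ref{L6.2} to obtain a model $U_1$ of $\mathrm{Th}(U)$ and a tuple $\bar{a}$ realizing $\Phi$, takes $H$ to be the complete type of $\bar{a}$ in $U_1$ (setting $\delta_i$ according to whether $U_1 \models \varphi_i(\bar{a})$), and then applies Lemma \ref{L6.2} once more to conclude that $H$ is an $n$-type of $U$; note that Lemma \ref{L6.2} rests on Lemma \ref{L6.R.1}, i.e., essentially on the compactness theorem for first-order logic. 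You instead build the extension \emph{syntactically}, by a Lindenbaum-style induction along the enumeration of $F_n(\sigma)$, maintaining finite satisfiability in $\{U\}$ at each step; your key step (if both choices of $\delta_i$ failed, the union of the two finite witnesses would be satisfiable in $U$ on some tuple $\bar{a}$, and $U$ decides $\varphi_i$ on $\bar{a}$, contradicting one of the failures) is sound, and your bookkeeping — including the observation that keeping all of $\Phi$ in the invariant makes the rule $\delta_i = 1$ for $\varphi_i \in \Phi$ harmless, and that any finite subset of $\Psi$ sits inside a finitely satisfiable initial segment — is complete. What each approach buys: the paper's argument is shorter given that Lemma \ref{L6.2} is already available and it reuses that machinery, while yours is more elementary and self-contained, never leaving the structure $U$ and never invoking the existence of other models of $\mathrm{Th}(U)$, so it does not depend on the first-order compactness theorem at all.
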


\begin{proof}
Let the structure $U$ be program-saturated. Using Theorem \ref{T6.1}, we
obtain that, for any $n\in \omega \setminus \{0\}$, any $n$-type of the
structure $U$ is satisfiable in $U$.

Let, for any $n\in \omega \setminus \{0\}$, any $n$-type of the structure $U$
be satisfiable in $U$. Let $\Phi \subseteq F_{n}(\sigma )$ and the set $\Phi
$ be finitely satisfiable in $\{U\}$. Using Lemma \ref{L6.2}, we obtain that
$\Phi $ is satisfiable in some model $U_{1}$ of the theory $\mathrm{Th}(U)$
on some tuple $\bar{a}\in A_{1}^{n}$, where $A_{1}$ is the universe of the
structure $U_{1}$. Denote $H=\{\varphi _{i}^{\delta _{i}}:i\in \omega \}$,
where, for any $i\in \omega $, $\delta _{i}\in \{0,1\}$ and $U_{1}\models
\varphi _{i}^{\delta _{i}}(\bar{a})$. Using Lemma \ref{L6.2}, we obtain that
$H$ is an $n$-type of the structure $U$. Hence the set $H$ is
satisfiable in $U$. Taking into account that $\Phi \subseteq H$, we obtain
that $\Phi $ is satisfiable in $U$. Thus, the class $\{U\}$ has the property
of compactness. Using Theorem \ref{T6.1}, we obtain that $U$ is a
program-saturated structure.
\end{proof}

Let $U_{1}$ and $U_{2}$ be structures of the signature $\sigma $ with
universes $A_{1}$ and $A_{2}$, respectively.

\begin{definition}
The structure $U_{2}$ is an
\emph{extension} of the structure $U_{1}$ if $A_{1}\subseteq A_{2}$, each
predicate of $U_{1}$ is the restriction of corresponding predicate of $U_{2}$
to $A_{1}$, each function of $U_{1}$ is the restriction of corresponding
function of $U_{2}$ to $A_{1}$, and each constant of $U_{1}$ is the
corresponding constant of $U_{2}$. The notation $U_{1}\subset U_{2}$ means
that $U_{2}$ is an extension of $U_{1}$. If  $U_{1}\subset U_{2}$, we will say that
$U_1$ is a \emph{substructure} of $U_2$.
\end{definition}

\begin{definition}
We will say that $U_{2}$ is an \emph{elementary extension} of $U_{1}$ if $%
U_{2}$ is an extension of $U_{1}$ and, for any formula $\varphi \in
F_{n}(\sigma )$ and any $n$-tuple $\bar{a}\in A_{1}^{n}$, $U_{1}\models
\varphi (\bar{a})$ if and only if $U_{2}\models \varphi (\bar{a})$. The
notation $U_{1}\prec U_{2}$ means that $U_{2}$ is an elementary extension of
$U_{1}$.  If  $U_{1}\prec U_{2}$, we will say that
$U_1$ is an \emph{elementary substructure} of $U_2$.
\end{definition}

\begin{definition}
The structures $U_{1}$ an $U_{2}$ are called \emph{elementary equivalent}
if, for any sentence $\varphi $ of the signature $\sigma $, $U_{1}\models
\varphi $ if and only if $U_{2}\models \varphi $. The notation $U_{1}\equiv
U_{2}$ means that the structures $U_{1}$ an $U_{2}$ are elementary
equivalent.
\end{definition}

Next lemma follows directly from Proposition 3.1.1 \cite{ChangK92}.

\begin{lemma}
\label{L6.R.5} \emph{(a)} If $U_{1}\prec U_{2}$, then $U_{1}\equiv U_{2}$.

{\rm (b)} If $U_{1}\prec U_{3}$, $U_{2}\prec U_{3}$ and $U_{1}\subset
U_{2}$, then $U_{1}\prec U_{2}$.
\end{lemma}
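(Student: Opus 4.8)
The plan is to derive both parts directly from the definitions of elementary extension, elementary equivalence, and substructure, since each statement is essentially an unwinding of these definitions; no deep model theory is needed beyond the cited Proposition 3.1.1 of \cite{ChangK92}.

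For part (a), I would begin by noting that a sentence $\varphi$, having no free variables, has its (empty) set of free variables contained in $X_1$, so $\varphi \in F_1(\sigma)$. Since $U_1 \prec U_2$, the defining condition applied with $n=1$ gives, for every $\bar{a} = (a_0) \in A_1^1$, that $U_1 \models \varphi(\bar{a})$ if and only if $U_2 \models \varphi(\bar{a})$. Because $\varphi$ contains no free variables, the truth value of $\varphi(\bar{a})$ in each structure is independent of $a_0$ and coincides with $U_i \models \varphi$. The only point requiring care is that we must actually be able to instantiate the dummy variable; this is legitimate precisely because the universe $A_1$ of a structure is nonempty by definition, so some $a_0 \in A_1$ exists. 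Fixing any such $a_0$ then yields $U_1 \models \varphi$ if and only if $U_2 \models \varphi$, which is exactly $U_1 \equiv U_2$.

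For part (b), the hypotheses give $A_1 \subseteq A_2$ (from $U_1 \subset U_2$) and $A_2 \subseteq A_3$ (from $U_2 \prec U_3$), hence $A_1 \subseteq A_2 \subseteq A_3$. To establish $U_1 \prec U_2$ I must verify its defining condition: for every $n \in \omega \setminus \{0\}$, every $\varphi \in F_n(\sigma)$, and every tuple $\bar{a} \in A_1^n$, that $U_1 \models \varphi(\bar{a})$ if and only if $U_2 \models \varphi(\bar{a})$. Fixing such $\varphi$ and $\bar{a}$, I observe $\bar{a} \in A_1^n \subseteq A_2^n \subseteq A_3^n$, so all three satisfaction statements are meaningful. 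Applying $U_1 \prec U_3$ gives $U_1 \models \varphi(\bar{a})$ iff $U_3 \models \varphi(\bar{a})$, and applying $U_2 \prec U_3$ — legitimate since $\bar{a} \in A_2^n$ — gives $U_2 \models \varphi(\bar{a})$ iff $U_3 \models \varphi(\bar{a})$. Chaining these two equivalences through the common term $U_3 \models \varphi(\bar{a})$ yields $U_1 \models \varphi(\bar{a})$ iff $U_2 \models \varphi(\bar{a})$, completing the verification.

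I do not expect a genuine obstacle here: both parts are formal consequences of the definitions, and the argument amounts to transporting satisfaction of formulas between structures through the common elementary extension $U_3$. If anything is delicate, it is the bookkeeping in part (a) concerning the passage from sentences to formulas with a free variable — ensuring the dummy instantiation is justified, which rests only on nonemptiness of the universe — and, in part (b), confirming that each invocation of an elementary-extension hypothesis is applied to a tuple lying in the appropriate universe.
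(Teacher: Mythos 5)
Your proof is correct. Note, however, that the paper itself offers no argument for this lemma: it simply states that the result ``follows directly from Proposition 3.1.1'' of Chang and Keisler, which is precisely the statement of parts (a) and (b). What you have written is, in effect, the standard proof of that cited proposition, carried out from the paper's own definitions: part (a) by viewing a sentence as an element of $F_1(\sigma)$ and instantiating the dummy variable (using nonemptiness of the universe), and part (b) by transporting satisfaction of each formula on each tuple $\bar{a}\in A_1^n\subseteq A_2^n$ through the common elementary extension $U_3$. Both steps are sound; the one small point worth making explicit in part (b) is that the hypothesis $U_1\subset U_2$ is itself the ``extension'' half of the definition of $U_1\prec U_2$, so only the formula-transfer condition needs verification --- which is exactly what your chaining argument supplies. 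The trade-off between the two treatments is the usual one: the paper's citation keeps the exposition short by deferring a routine model-theoretic fact to the literature, while your version makes the development self-contained at the cost of a few lines.
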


Let $U$ be a structure of the signature $\sigma $ and $n\in \omega \setminus \{0\}$. We denote by $S_{n}U$ the set of all $n$-types of the structure $U$.

\begin{lemma}
\label{L6.5} If a structure $U_{2}$ is an elementary extension of a structure
$U_{1}$, then $S_{n}U_{1}=S_{n}U_{2}$.
\end{lemma}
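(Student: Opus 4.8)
The plan is to reduce the equality $S_n U_1 = S_n U_2$ to the elementary equivalence $U_1 \equiv U_2$, which follows from $U_1 \prec U_2$ by Lemma~\ref{L6.R.5}(a). Since the signature $\sigma$ is fixed, both $S_n U_1$ and $S_n U_2$ consist of sets of the form $\{\varphi_i^{\delta_i} : i \in \omega\}$ built from the same enumeration $F_n(\sigma) = \{\varphi_i : i \in \omega\}$; the only difference between membership in $S_n U_1$ and membership in $S_n U_2$ is the requirement of finite satisfiability in $\{U_1\}$ versus $\{U_2\}$. Thus it suffices to show that a set $H = \{\varphi_i^{\delta_i} : i \in \omega\}$ is finitely satisfiable in $\{U_1\}$ if and only if it is finitely satisfiable in $\{U_2\}$.

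First I would observe that finite satisfiability of $H$ reduces to the truth of sentences. A finite subset $\{\varphi_{i_1}^{\delta_{i_1}}, \ldots, \varphi_{i_m}^{\delta_{i_m}}\}$ of $H$ is satisfiable in a structure $U$ exactly when $U \models \exists x_0 \cdots \exists x_{n-1}(\varphi_{i_1}^{\delta_{i_1}} \wedge \cdots \wedge \varphi_{i_m}^{\delta_{i_m}})$, and the displayed formula is a sentence of $\sigma$. Since $U_1 \equiv U_2$, this sentence is true in $U_1$ if and only if it is true in $U_2$. Hence every finite subset of $H$ is satisfiable in $U_1$ if and only if it is satisfiable in $U_2$, which yields the desired equivalence of finite satisfiability and therefore $H \in S_n U_1 \Leftrightarrow H \in S_n U_2$.

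Alternatively, I could route the argument through the theory: by Lemma~\ref{L6.2}, finite satisfiability of $H$ in $\{U_j\}$ is equivalent to consistency of $H$ with $\mathrm{Th}(U_j)$, and $\mathrm{Th}(U_1) = \mathrm{Th}(U_2)$ because $U_1 \equiv U_2$, so the two consistency conditions coincide. Either way, the substantive point---and the only step that uses the hypothesis---is that the finite-satisfiability condition defining an $n$-type is expressible by existential sentences and is therefore preserved by elementary equivalence; I do not expect any real obstacle beyond making this translation explicit.
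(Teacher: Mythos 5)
Your proposal is correct and essentially matches the paper's proof: the paper's argument is exactly your alternative route, namely applying Lemma \ref{L6.R.5}(a) to get $\mathrm{Th}(U_{1})=\mathrm{Th}(U_{2})$ and then invoking Lemma \ref{L6.2} to conclude $S_{n}U_{1}=S_{n}U_{2}$. Your primary argument via existential closures is the same idea with Lemma \ref{L6.2} inlined (its proof is precisely that existential-sentence translation), so there is no substantive difference.
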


\begin{proof}
Let $U_{1}\prec U_{2}$. Using statement (a) of Lemma \ref{L6.R.5}, we obtain
that $\mathrm{Th}(U_{1})=\mathrm{Th}(U_{2})$. From here and from Lemma \ref%
{L6.2} it follows that, for any $n\in \omega \setminus \{0\}$, $%
S_{n}U_{1}=S_{n}U_{2}$.
\end{proof}

Next lemma follows directly from Lemma 3.4 \cite{Jech02}.

\begin{lemma}
\label{L6.R.6} If $Y$ is a set of cardinals, then $\sup Y$ is a cardinal.
\end{lemma}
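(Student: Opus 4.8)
The plan is to work with the standard set-theoretic definition of a cardinal as an initial ordinal, i.e., an ordinal $\kappa$ that is not equinumerous with any strictly smaller ordinal, equivalently an ordinal satisfying $|\kappa| = \kappa$, where $|\alpha|$ denotes the least ordinal equinumerous with $\alpha$. I would first observe that, since every cardinal is an ordinal and a set of ordinals has a supremum that is itself an ordinal (namely $\bigcup Y$), the object $\kappa := \sup Y = \bigcup Y$ is a well-defined ordinal. The whole task then reduces to showing that this particular ordinal $\kappa$ is initial.

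I would argue by contradiction. Suppose $\kappa$ is not a cardinal; then $\lambda := |\kappa| < \kappa$, and $\lambda$ is itself a cardinal (a cardinality is always a cardinal), so $|\lambda| = \lambda$. Because $\kappa = \sup Y$ is the \emph{least} upper bound of $Y$, the strictly smaller ordinal $\lambda$ fails to be an upper bound, and hence there must exist some $\mu \in Y$ with $\lambda < \mu$; by the definition of supremum we also have $\mu \le \kappa$. Now I would invoke monotonicity of cardinality under ordinal inclusion: from $\mu \le \kappa$ we get $|\mu| \le |\kappa| = \lambda$. Since $\mu \in Y$ is a cardinal, $|\mu| = \mu$, so $\mu \le \lambda$, contradicting $\lambda < \mu$. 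This contradiction shows $\kappa$ is a cardinal.

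The argument has essentially no hard part; it rests on two entirely standard facts, which are exactly what Lemma 3.4 of \cite{Jech02} supplies: that $\bigcup Y$ is the ordinal supremum of $Y$, and that passing to cardinality is monotone, i.e., $\alpha \le \beta$ implies $|\alpha| \le |\beta|$. If anything deserves care, it is the single step that produces the witness $\mu$: one must use that $\kappa$ is a genuine least upper bound rather than merely an upper bound, so that any ordinal strictly below $\kappa$ is dominated by some member of $Y$. Everything else is bookkeeping with the equalities $|\lambda| = \lambda$ and $|\mu| = \mu$, which hold because $\lambda$ and $\mu$ are cardinals.
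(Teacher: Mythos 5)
Your proof is correct. Note, however, that the paper does not prove this lemma at all: it simply states that it ``follows directly from Lemma 3.4'' of Jech's book, and that cited lemma \emph{is} verbatim the statement in question (sup of a set of cardinals is a cardinal), not the pair of auxiliary facts you attribute to it (that $\bigcup Y$ is the ordinal supremum and that $\alpha\le\beta$ implies $|\alpha|\le|\beta|$). So your contribution is a self-contained proof where the paper has only a citation, and your argument is essentially the standard textbook one: assume $\kappa=\sup Y$ is not initial, take $\lambda=|\kappa|<\kappa$, use leastness of the supremum to find $\mu\in Y$ with $\lambda<\mu\le\kappa$, and derive $\mu=|\mu|\le|\kappa|=\lambda$, a contradiction. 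All steps check out, including the two points that deserve care: that $|\kappa|$ is itself a cardinal, and that one needs $\kappa$ to be a \emph{least} upper bound to produce the witness $\mu$; the degenerate cases ($Y$ empty, or $\sup Y\in Y$) are also handled automatically by the contradiction scheme.
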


Let $\alpha $ be a cardinal. We denote by $\alpha ^{+}$ the\emph{\ successor}
of the cardinal $\alpha $, i.e., the least cardinal that is greater than $%
\alpha $. Denote $||\sigma ||=\omega \cup |\sigma |$.

Next lemma follows directly from Lemma 5.1.4 \cite{ChangK92}.

\begin{lemma}
\label{L6.R.7} Let $U$ be a structure of the signature $\sigma $ with the
universe $A$ and $\alpha $ be a cardinal such that $||\sigma ||\leq \alpha $
and $\omega \leq |A|\leq 2^{\alpha }$. Then there is an $\alpha ^{+}$%
-saturated elementary extension of $U$ of the cardinality $2^{\alpha }$.
\end{lemma}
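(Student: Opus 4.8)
The plan is to realize the required structure as the union of an elementary chain of length $\alpha^{+}$, after first enlarging $U$ to cardinality exactly $2^{\alpha}$ so that the final cardinality is pinned down from below. Note first that the hypothesis $||\sigma||\leq\alpha$ forces $\alpha\geq\omega$, so $\alpha$ is infinite; consequently $\alpha^{+}$ is a regular cardinal and, by Cantor's theorem, $\alpha<2^{\alpha}$, whence $\alpha^{+}\leq 2^{\alpha}$. Since $\max(|A|,||\sigma||)\leq 2^{\alpha}$ and $U$ is infinite, I would first apply the upward L\"owenheim--Skolem--Tarski theorem (cf.\ \cite{ChangK92}) to replace $U$ by an elementary extension $U'\succ U$ with $|U'|=2^{\alpha}$. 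It then suffices to produce an $\alpha^{+}$-saturated $U^{\ast}\succ U'$ with $|U^{\ast}|\leq 2^{\alpha}$, because then $2^{\alpha}=|U'|\leq|U^{\ast}|\leq 2^{\alpha}$, and $U^{\ast}\succ U$ by transitivity of $\prec$.

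The heart of the argument is the successor step. Given a structure $V$ in the chain with $|V|\leq 2^{\alpha}$, I would build $V\prec V^{+}$ with $|V^{+}|\leq 2^{\alpha}$ realizing, for every parameter set $Y\subseteq V$ with $|Y|\leq\alpha$, every complete $1$-type over $Y$ (every maximal $\Phi\subseteq F_{1}(\sigma_{Y})$ consistent with $\mathrm{Th}(V_{Y})$). The counting runs as follows: the number of such $Y$ is at most $(2^{\alpha})^{\alpha}=2^{\alpha}$, and since $||\sigma_{Y}||\leq\alpha$ the set $F_{1}(\sigma_{Y})$ has cardinality $\alpha$, so there are at most $2^{\alpha}$ complete $1$-types over each $Y$; hence the family $P$ of types to be realized has $|P|\leq 2^{\alpha}$. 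Introducing a fresh constant $c_{p}$ for each $p\in P$ and adjoining $\{\varphi(c_{p}):\varphi\in p\}$ to the elementary diagram of $V$, one checks finite satisfiability (finitely many types, over parameters already in $V$, can be realized one after another in a single elementary extension, each being consistent with the appropriate $\mathrm{Th}(V_{Y})$) and applies the compactness theorem to get a model. The downward L\"owenheim--Skolem--Tarski theorem then extracts $V^{+}\succ V$ of cardinality $\leq|V|+|P|+||\sigma||\leq 2^{\alpha}$ still containing a witness for each $p\in P$.

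I would then form the elementary chain $U'=V_{0}\prec V_{1}\prec\cdots\prec V_{\beta}\prec\cdots$ for $\beta<\alpha^{+}$, taking $V_{\beta+1}=(V_{\beta})^{+}$ at successors and unions at limits, and set $U^{\ast}=\bigcup_{\beta<\alpha^{+}}V_{\beta}$. By the elementary chain theorem, $V_{\beta}\prec U^{\ast}$ for all $\beta$, so in particular $U'\prec U^{\ast}$; an easy induction keeps $|V_{\beta}|\leq 2^{\alpha}$ throughout, giving $|U^{\ast}|\leq\alpha^{+}\cdot 2^{\alpha}=2^{\alpha}$. For $\alpha^{+}$-saturation, take $Y\subseteq U^{\ast}$ with $|Y|\leq\alpha<\alpha^{+}$; since $\alpha^{+}$ is regular, $Y\subseteq V_{\beta}$ for some $\beta<\alpha^{+}$, and $V_{\beta}\prec U^{\ast}$ yields $\mathrm{Th}((U^{\ast})_{Y})=\mathrm{Th}((V_{\beta})_{Y})$. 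Any $\Phi\subseteq F_{1}(\sigma_{Y})$ consistent with this theory extends to a complete $1$-type realized already in $V_{\beta+1}\subseteq U^{\ast}$, so $\Phi$ is satisfiable in $(U^{\ast})_{Y}$; this is precisely $\alpha^{+}$-saturation.

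The main obstacle I anticipate is the cardinality bookkeeping that must keep every stage at $2^{\alpha}$ simultaneously: one has to verify the type count $|P|\leq 2^{\alpha}$ (using $\alpha\cdot\alpha=\alpha$ for infinite $\alpha$), confirm that the downward L\"owenheim--Skolem step loses no realized type, and exploit the regularity of $\alpha^{+}$ so that each small parameter set appears at a bounded stage of a chain of length $\alpha^{+}$. Getting the cardinality to be exactly $2^{\alpha}$ rather than merely $\leq 2^{\alpha}$ is secured only by the initial padding to $U'$, since $\alpha^{+}$-saturation alone need not force the lower bound, as $\omega$-categorical examples show. The remaining ingredients---compactness, both directions of L\"owenheim--Skolem--Tarski, and the elementary chain theorem---are standard (cf.\ \cite{ChangK92}), which is why the statement can be quoted directly from Lemma~5.1.4 there.
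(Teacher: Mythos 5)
Your proof is correct. Note, however, that the paper itself contains no proof of this lemma: it is quoted directly from Lemma 5.1.4 of Chang and Keisler \cite{ChangK92}, and your argument is essentially the standard proof of that result --- pad $U$ up to cardinality exactly $2^{\alpha}$ by upward L\"{o}wenheim--Skolem, build an elementary chain of length $\alpha ^{+}$ whose successor steps realize all $1$-types over parameter sets of size at most $\alpha $ (at most $2^{\alpha }$ of them, by the counting $(2^{\alpha })^{\alpha }=2^{\alpha }$ and $|F_{1}(\sigma _{Y})|\leq \alpha $), and use regularity of $\alpha ^{+}$ to conclude that the union is $\alpha ^{+}$-saturated. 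All the individual steps check out: the elementary-diagram compactness argument and downward L\"{o}wenheim--Skolem extraction at successor stages, the bound $\alpha ^{+}\cdot 2^{\alpha }=2^{\alpha }$ for the union, and the initial padding, which is indeed what pins the final cardinality to exactly $2^{\alpha }$ rather than merely at most $2^{\alpha }$.
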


Next lemma follows directly from Theorem 3.1.6 \cite{ChangK92}.

\begin{lemma}
\label{L6.R.8} Let $U$ be a structure of the signature $\sigma $ with the
universe $A$, $|A|=\alpha $ and $||\sigma ||\leq \beta \leq \alpha $. Then, for
any set $B\subseteq A$ with cardinality at most $\beta $, the structure $%
U $ has an elementary substructure of cardinality $\beta $, which universe
contains $B$.
\end{lemma}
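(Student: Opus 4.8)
The plan is to prove this by the classical downward Löwenheim--Skolem--Tarski construction: build the universe of the desired elementary substructure as the union of an increasing $\omega$-chain of witness-closed subsets of $A$, and then invoke the Tarski--Vaught criterion. First I would observe that $\beta$ is infinite, since $||\sigma||=\omega\cup|\sigma|\geq\omega$ and $\beta\geq||\sigma||$ force $\beta\geq\omega$. Because $|A|=\alpha\geq\beta$, I may choose a set $B_{0}\subseteq A$ with $B\subseteq B_{0}$ and $|B_{0}|=\beta$ (if $|B|<\beta$, simply adjoin arbitrary elements of $A$). This pins down the lower bound on the cardinality of the eventual universe.

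Next I would define an increasing chain $B_{0}\subseteq B_{1}\subseteq\cdots$ of subsets of $A$. Given $B_{m}$, for every formula $\varphi(x_{0},x_{1},\ldots,x_{k})\in F_{k+1}(\sigma)$ and every tuple $\bar{a}\in B_{m}^{k}$ with $U\models\exists x_{0}\,\varphi(x_{0},\bar{a})$, I would select one element $b\in A$ satisfying $U\models\varphi(b,\bar{a})$ and place all such chosen witnesses into $B_{m+1}$. The crucial cardinality estimate is that each step adds at most $\beta$ new elements: the set of formulas of the signature $\sigma$ has cardinality $||\sigma||\leq\beta$, and the number of finite tuples over a set of size $\beta$ is $\beta$ because $\beta$ is infinite; hence there are at most $||\sigma||\cdot\beta=\beta$ pairs $(\varphi,\bar{a})$ to witness, so $|B_{m+1}|\leq\beta$. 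Setting $A_{0}=\bigcup_{m\in\omega}B_{m}$, I obtain $|A_{0}|\leq\omega\cdot\beta=\beta$, and combined with $B_{0}\subseteq A_{0}$ this gives $|A_{0}|=\beta$.

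I would then check that $A_{0}$ is the universe of a substructure: every constant $c^{U}$ is witnessed by $\exists x_{0}(x_{0}=c)$ already at the first stage, and for every function symbol $f$ and every $\bar{a}\in A_{0}^{k}$ the value $f^{U}(\bar{a})$ is witnessed by $\exists x_{0}(x_{0}=f(\bar{a}))$ once $\bar{a}$ lies in some $B_{m}$; thus $A_{0}$ contains all constants and is closed under all functions, and I let $U_{0}$ be the induced substructure with universe $A_{0}$. The main step, and the part I expect to require the most care, is verifying $U_{0}\prec U$ through the Tarski--Vaught test: given a formula $\varphi(x_{0},\bar{y})$ and parameters $\bar{a}$ from $A_{0}$ with $U\models\exists x_{0}\,\varphi(x_{0},\bar{a})$, the tuple $\bar{a}$ already lies in some $B_{m}$, so a witness was placed into $B_{m+1}\subseteq A_{0}$. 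An induction on the complexity of formulas then yields $U_{0}\models\psi(\bar{a})\iff U\models\psi(\bar{a})$ for every $\psi$ and every $\bar{a}\in A_{0}^{n}$, i.e. $U_{0}\prec U$. Since $B\subseteq B_{0}\subseteq A_{0}$ and $|A_{0}|=\beta$, the substructure $U_{0}$ is exactly as required. The only genuine subtleties are the cardinality bookkeeping that keeps every stage (and hence the union) of size at most $\beta$, and the inductive verification of the Tarski--Vaught condition; everything else is routine closure bookkeeping.
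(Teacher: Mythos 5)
Your proof is correct, but it does considerably more work than the paper itself: the paper offers no argument for this lemma at all, simply observing that it follows directly from Theorem 3.1.6 of Chang--Keisler, which is precisely the downward L\"owenheim--Skolem--Tarski theorem in the form stated here. What you have written out is, in essence, the standard proof of that cited theorem: pad $B$ up to a set $B_{0}$ of cardinality exactly $\beta$, close under witnesses through an increasing $\omega$-chain whose stages remain of size at most $\beta$ (the counting of formula--parameter pairs working because $\beta \geq ||\sigma|| \geq \omega$), note that the union $A_{0}$ contains all constants and is closed under all functions, and conclude elementarity via the Tarski--Vaught test. All of your steps are sound, including the two places that genuinely need care: the cardinality bookkeeping that keeps $|A_{0}| = \beta$, and the observation that any finite parameter tuple from the union already lies in a single stage $B_{m}$, so its witness appears in $B_{m+1}$. (One purely cosmetic point: you should say explicitly that $B_{m+1}$ is taken to contain $B_{m}$ together with the chosen witnesses, though this is what your phrase ``increasing chain'' clearly intends.) The trade-off between the two routes is the obvious one: the paper's citation keeps the exposition short and leans on a standard reference, while your version is self-contained and makes visible exactly where the hypotheses $||\sigma|| \leq \beta \leq \alpha$ and $|B| \leq \beta$ enter the argument.
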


Next lemma follows directly from Lemma 5.2 \cite{Jech02}.

\begin{lemma}
\label{L6.R.9} 
Let $S$ be a family of sets. Then 
$|\bigcup_{P \in S} P| \le |S| \cdot \sup \{|P|: P \in S\}$.
\end{lemma}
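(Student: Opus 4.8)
The plan is to exhibit an injection from $\bigcup_{P \in S} P$ into the Cartesian product $S \times \kappa$, where $\kappa = \sup\{|P| : P \in S\}$, and then to read off the inequality from $|S \times \kappa| = |S| \cdot \kappa$. First I would invoke Lemma \ref{L6.R.6} to observe that $\kappa$ is in fact a cardinal, so that $|\kappa| = \kappa$ and the product $|S| \cdot \kappa$ is meaningful exactly as written, with no subsequent conversion needed. The degenerate cases in which $S = \emptyset$ or every $P \in S$ is empty are disposed of immediately, since then both sides of the claimed inequality equal $0$.

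Next, for each $P \in S$ the relation $|P| \le \kappa$ furnishes an injection $P \to \kappa$; using the axiom of choice I would select simultaneously one such injection $h_P : P \to \kappa$ for every $P \in S$. I would also fix a choice function $c$ assigning to each element $a \in \bigcup_{P \in S} P$ some set $c(a) \in S$ with $a \in c(a)$. With these data in hand, define $F : \bigcup_{P \in S} P \to S \times \kappa$ by $F(a) = (c(a), h_{c(a)}(a))$.

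The key step is verifying that $F$ is injective. Suppose $F(a) = F(b)$. Comparing first coordinates gives $c(a) = c(b) =: P$, and comparing second coordinates then gives $h_P(a) = h_P(b)$; since $h_P$ is injective, $a = b$. Hence $F$ is injective, and therefore $|\bigcup_{P \in S} P| \le |S \times \kappa| = |S| \cdot \kappa = |S| \cdot \sup\{|P| : P \in S\}$, which is the assertion.

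I do not expect a genuine obstacle, as the whole argument is elementary cardinal arithmetic. The only points demanding care are the two appeals to the axiom of choice — once to pick the family of injections $(h_P)_{P \in S}$ and once for the choice function $c$ — together with the bookkeeping, supplied by Lemma \ref{L6.R.6}, that guarantees $\kappa$ is literally a cardinal so that $|S| \cdot \kappa$ is well defined.
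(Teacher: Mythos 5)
Your proof is correct. Note that the paper does not prove this lemma at all --- it simply cites Lemma 5.2 of Jech's \emph{Set Theory} --- and the argument you give (an injection $a \mapsto (c(a), h_{c(a)}(a))$ into $S \times \kappa$, using the axiom of choice once to select the family of injections $h_P$ and once for the choice function $c$, plus Lemma \ref{L6.R.6} to ensure $\kappa = \sup\{|P| : P \in S\}$ is a cardinal) is exactly the standard proof of that cited result, with the degenerate cases and both appeals to choice handled properly.
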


Let $U$ be a structure of the signature $\sigma $ with the universe $A$. For
$n\in \omega \setminus \{0\}$, we denote  $\alpha _{n}(U)=|S_{n}U|$. If the set $A$ is a finite
set, we denote $\alpha (U)=|A|$. If $A$ is an infinite set, then denote $%
\alpha (U)=\sup \{|A|,\alpha _{1}(U),\alpha _{2}(U),\ldots \}$. By Lemma \ref%
{L6.R.6}, $\alpha (U)$ is a cardinal.

\begin{theorem}
\label{T6.4} Let $U_{1}$ be a structure of the signature $\sigma $.

{\rm (a)} The cardinality of any program-saturated structure, which is an
elementary extension of the structure $U_{1}$, is greater than or equal to $%
\alpha (U_{1})$.

{\rm (b)} There exists a program-saturated structure $U_{2}$, which is an
elementary extension of the structure $U_{1}$ and which cardinality is equal
to $\alpha (U_{1})$.
\end{theorem}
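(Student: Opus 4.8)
The key characterization I would rely on is Lemma \ref{L6.3}: a structure is program-saturated exactly when, for every $n$, all of its $n$-types are satisfiable (realized) in it. Combined with Lemma \ref{L6.5}, which says $n$-types are invariant under elementary extension ($S_n U_1 = S_n U_2$), this reduces the whole problem to a counting/realization question. For part (a), let $U_2 \succ U_1$ be program-saturated. By Lemma \ref{L6.5} every $n$-type of $U_1$ is an $n$-type of $U_2$, and by Lemma \ref{L6.3} each such type is realized by some tuple in $A_2^n$. Distinct types cannot be realized by the same tuple (two distinct types differ on some $\varphi_i$, and a single tuple satisfies exactly one of $\varphi_i, \lnot\varphi_i$), so the map sending each type to a realizing tuple is injective; hence $|A_2|^n \ge \alpha_n(U_1)$ for every $n$. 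Since $|A_2| \ge |A_1|$ as well, and for an infinite universe $|A_2|^n = |A_2|$, we get $|A_2| \ge \sup\{|A_1|, \alpha_1(U_1), \alpha_2(U_1),\ldots\} = \alpha(U_1)$. If $A_1$ is finite the bound $|A_2| \ge |A_1| = \alpha(U_1)$ is immediate. This establishes the lower bound.

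For part (b) I would construct $U_2$ by a union-of-chain / repeated-saturation argument, driving its cardinality down to exactly $\alpha(U_1)$. First handle the trivial case: if $A_1$ is finite, then $U_1$ itself is already program-saturated (a finite structure realizes all its types), so take $U_2 = U_1$. Now assume $A_1$ infinite, write $\alpha = \alpha(U_1)$. To get \emph{existence} of \emph{some} program-saturated elementary extension, apply Lemma \ref{L6.R.7} to obtain a saturated (in particular $\omega$-saturated) elementary extension $U^\ast \succ U_1$; by Theorem \ref{T6.3}(a), $U^\ast$ is program-saturated. The remaining task is to shrink $U^\ast$ back to cardinality exactly $\alpha$ while keeping it an elementary extension of $U_1$ and keeping it program-saturated.

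The shrinking step is where the real work lies, and I expect it to be the main obstacle. The natural tool is the downward L\"owenheim--Skolem result, Lemma \ref{L6.R.8}: inside $U^\ast$ one can find elementary substructures of cardinality $\alpha$ containing any prescribed set $B$ of size $\le \alpha$. The difficulty is that an arbitrary such substructure need not be program-saturated, i.e.\ it need not realize all of its $n$-types. I would therefore build an elementary chain $U_1 \prec V_0 \prec V_1 \prec \cdots$ of substructures of $U^\ast$, each of cardinality $\alpha$, closing off under type-realization at each stage: given $V_k$, for every $n$ and every $n$-type $p$ of $V_k$ (equivalently of $U_1$, by Lemma \ref{L6.5}, so there are at most $\sup_n \alpha_n(U_1) \le \alpha$ of them), pick in $U^\ast$ a tuple realizing $p$ (available since $U^\ast$ is program-saturated), collect all these $\alpha$-many witness elements into a set $B_k$ with $|B_k| \le \alpha$, and use Lemma \ref{L6.R.8} to get $V_{k+1} \prec U^\ast$ of cardinality $\alpha$ with $B_k \cup (\text{universe of } V_k) \subseteq V_{k+1}$. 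Let $U_2 = \bigcup_k V_k$. The cardinality bound follows from Lemma \ref{L6.R.9}: $|U_2| \le \omega \cdot \alpha = \alpha$, and $|U_2| \ge |A_1|$ forces equality. That $U_2 \succ U_1$ and $U_2 \prec U^\ast$ follow from the elementary-chain theorem together with Lemma \ref{L6.R.5}(b). Finally $U_2$ is program-saturated: given any $n$-type $p$ of $U_2$, it is an $n$-type of $U_1$ (Lemma \ref{L6.5}) hence of some $V_k$, so a witness for $p$ was thrown into $B_k \subseteq V_{k+1} \subseteq U_2$; thus $p$ is realized in $U_2$, and Lemma \ref{L6.3} gives the conclusion. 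The subtle point to verify carefully is that realizing the types of each finite stage $V_k$ indeed suffices for $U_2$ --- this works precisely because $S_n$ is constant along the chain by Lemma \ref{L6.5}, so no new types appear in the limit.
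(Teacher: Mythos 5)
Your proposal is correct, and part (a) is exactly the paper's argument: Lemmas \ref{L6.3} and \ref{L6.5}, injectivity of the type-to-realizing-tuple assignment, and $|A_2^n|=|A_2|$ for infinite $A_2$. In part (b) you assemble the same ingredients as the paper --- Lemma \ref{L6.R.7} plus Theorem \ref{T6.3}(a) to get a program-saturated elementary extension $U^{\ast}$ of $U_1$, Lemma \ref{L6.R.8} to shrink it, and Lemmas \ref{L6.5} and \ref{L6.3} to check program-saturation of the shrunken structure --- but you organize the shrinking as an $\omega$-chain $V_0\prec V_1\prec\cdots$ closed off under type realization, whereas the paper does it in a single step: since $S_nU^{\ast}=S_nU_1$ for all $n$ (Lemma \ref{L6.5}), one can collect witnesses in $U^{\ast}$ for \emph{all} $n$-types and all $n$ at once into a set $C$ with $|C|\leq\alpha(U_1)$ (Lemma \ref{L6.R.9}), and a single application of Lemma \ref{L6.R.8} to $Y=A_1\cup C$ already produces the desired structure. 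No iteration is needed, precisely for the reason you state in your closing sentence: type-invariance under elementary extension means no new types ever appear, so one round of witnessing closes everything off. Your chain is therefore correct but redundant, and it costs you an extra tool the paper never invokes, namely the elementary chain theorem (that the union of an elementary chain is an elementary extension of each member, and an elementary substructure of $U^{\ast}$), which is not among the paper's cited lemmas. Two minor slips to repair: (i) Lemma \ref{L6.R.7} gives an $\alpha^{+}$-saturated extension of cardinality $2^{\alpha}$, which is $\omega$-saturated but not ``saturated'' in the paper's sense unless $\alpha^{+}=2^{\alpha}$; since only $\omega$-saturation is used, this is harmless; (ii) your justification that ``$|U_2|\geq|A_1|$ forces equality'' is wrong as stated, because $|A_1|$ can be strictly smaller than $\alpha(U_1)$ --- the correct reason is either $|U_2|\geq|V_0|=\alpha(U_1)$ by your own construction, or an appeal to part (a), which is what the paper does.
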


\begin{proof}
Let $A_{1}$ be the universe of the structure $U_{1}$.

(a) Let $U_{2}$ be a program-saturated elementary extension of the structure
$U_{1}$ and $A_{2}$ be the universe of the structure $U_{2}$. Evidently, $%
|A_{2}|\geq |A_{1}|$. Therefore if $A_{1}$ is a finite set, then $%
|A_{2}|\geq \alpha (U_{1})$. Let $A_{1}$ be an infinite set. Let $n\in
\omega \setminus \{0\}$. Since $U_{2}$ is a program-saturated structure, by
Lemma \ref{L6.3}, any $n$-type from $S_{n}U_{2}$ is satisfiable in $U_{2}$.
Evidently, different $n$-types from $S_{n}U_{2}$ are satisfiable on
different $n$-tuples from $A_{2}^{n}$. Therefore $|A_{2}^n|\geq |S_{n}U_{2}|$.
Since $A_{1}\subseteq A_{2}$ and $A_{1}$ is an infinite set, $%
|A_{2}^{n}|=|A_{2}|$. Using Lemma \ref{L6.5}, we obtain $|S_{n}U_{2}|=\alpha
_{n}(U_{1})$. Therefore, $|A_{2}|\geq \alpha _{n}(U_{1})$. Thus, $%
|A_{2}|\geq \alpha (U_{1})$.

(b) Let $A_{1}$ be a finite set. Evidently, $U_{1}$ is a program-saturated
structure, $U_{1}\prec U_{1}$ and the cardinality of $U_{1}$ is equal to $%
\alpha (U_{1})$. Therefore in the capacity of $U_{2}$ we can take $U_{1}$.

Let $A_{1}$ be an infinite set. From Lemma \ref{L6.R.7} it follows that
there exists an $\omega $-saturated elementary extension $U_{3}$ of the
structure $U_{1}$. From Theorem \ref{T6.3} it follows that $U_{3}$ is a
program-saturated structure. Using Lemma \ref{L6.3}, we obtain that, for any $n\in
\omega \setminus \{0\}$, any $n$-type $H\in S_{n}U_{3}$ is satisfiable on
some $n$-tuple $\bar{a}_{H}\in A_{3}^{n}$, where $A_{3}$ is the universe of $%
U_{3}$.

For an arbitrary $n\in \omega \setminus \{0\}$, we denote $B_{n}=\{%
\bar{a}_{H}:H\in S_{n}U_{3}\}$ and $C_{n}$ the set of all elements from $%
A_{3}$ belonging to the $n$-tuples from $B_{n}$. If $C_{n}$ is a finite set,
then $|C_{n}|<|A_{1}|\leq \alpha (U_{1})$. If $C_{n}$ is an infinite set,
then as it is not difficult to check, $|C_{n}|=|B_{n}|$. Evidently, $%
|B_{n}|\leq |S_{n}U_{3}|$. Using Lemma \ref{L6.5}, we obtain $%
|S_{n}U_{3}|=\alpha _{n}(U_{1})$. Hence $|C_{n}|\leq \alpha _{n}(U_{1})\leq
\alpha (U_{1})$. Denote $C=\bigcup_{n=1}^{\infty }C_{n}$.
Taking into account that $\alpha
(U_{1})$ is an infinite cardinal and, for any $n\in \omega \setminus \{0\}$,
$|C_{n}|\leq \alpha (U_{1})$, and using Lemma \ref{L6.R.9},
we obtain $|C|\leq $ $\alpha (U_{1})$.

Denote $Y=A_{1}\cup C$, $\beta =|Y|$, and $\alpha =|A_3|$.
Evidently, $Y\subseteq A_{3}$. Therefore $||\sigma||=\omega \le \beta \le \alpha$.
 Using Lemma \ref{L6.R.8}, we obtain
that there exists an elementary substructure $U_{2}$ of the structure $U_{3}$
such that $Y\subseteq A_{2}$ and $|A_{2}|=\beta $, where $A_{2}$ is the
universe of $U_{2}$. Using Lemma \ref{L6.5}, we obtain that, for any $n\in
\omega \setminus \{0\}$, $S_{n}U_{2}=S_{n}U_{3}$. Let $n\in \omega \setminus
\{0\}$. Since $C\subseteq $ $A_{2}$, $U_{2}\prec U_{3}$, and any $n$-type $%
H\in S_{n}U_{3}$ is satisfiable on some tuple from $C^{n}$, we obtain that
any $n$-type $H\in S_{n}U_{2}$ is satisfiable on some tuple from $A_{2}^{n}$%
. Using Lemma \ref{L6.3}, we obtain that $U_{2}$ is a program-saturated
structure. Evidently, $U_{1}\prec U_{3}$ and $U_{2}\prec U_{3}$. Taking into
account that $A_{1}\subseteq A_{2}\subseteq A_{3}$, it is not difficult to
show that $U_{1}$ is a substructure of the structure $U_{2}$, i.e., $U_{1} \subset U_{2}$. Using
statement (b) of Lemma \ref{L6.R.5}, we obtain $U_{1}\prec U_{2}$.
It is easy to show that $|A_2|=\beta \le \alpha (U_1)$.
Using  statement (a) of the
theorem, we obtain that the cardinality of the structure $U_{2}$ is equal to
$\alpha (U_{1})$.
\end{proof}

\section{ Computation Programs and Computation Trees \label{S6.6}}

This section is devoted to the transfer of the obtained results to programs
that are essentially close to the computation trees: if the scheme of such a program is a finite tree-scheme, then the program is an ordinary computation tree.

Let $\sigma $ be a finite or countable signature. We now define the notions
of primitive term and primitive formula of the signature $\sigma $.

A \emph{primitive term} of the signature $\sigma $ is a term of the
signature $\sigma $ of the following form:

\begin{itemize}
\item Variable  from the set $X$.

\item Constant symbol  from the signature $\sigma $.

\item Term $f(x_{i_{1}},\ldots ,x_{i_{m}})$, where $f$ is a function
symbol of arity $m$ from $\sigma $ and $x_{i_{1}},\ldots ,x_{i_{m}}\in X$.
\end{itemize}

A \emph{primitive formula} of the signature $\sigma $ is a formula of the
signature $\sigma $ of the following form:

\begin{itemize}
\item Formula $x_{i}=x_{j}$, where $=$ is the equality symbol and $%
x_{i},x_{j}\in X$.

\item Formula $p(x_{i_{1}},\ldots ,x_{i_{m}})$, where $p$ is a predicate
symbol of arity $m$ from $\sigma $ and $x_{i_{1}},\ldots ,$ $x_{i_{m}}\in X$.
\end{itemize}

\begin{definition}
Let $S=(n,G)$ be a scheme of the signature $\sigma $. This scheme will be
called a \emph{computation scheme} if each its function node is labeled with
an expression of the form $x_{i}\Leftarrow t$, where $t$ is a primitive term
of the signature $\sigma $ and $x_i \in X$, and each its predicate node is labeled with a
primitive formula of the signature $\sigma $.
\end{definition}
\begin{definition}
Let $U$ be a structure of the signature $\sigma $ and $\Gamma =(S,U)$ be a
program over the structure\emph{\ }$U$. The program $\Gamma $ will
be called a \emph{computation program} over the structure $U$ if $S$ is a
computation scheme.
\end{definition}
\begin{definition}
A computation scheme $(n,G)$ of the signature $\sigma $ will be called a
\emph{computation tree-scheme} of the signature $\sigma $ if $G$ is a tree with the root that coincides with the initial node of $G$. A computation tree-scheme $(n,G)$ of the signature $\sigma $ will be called \emph{finite} if  $G$ is a finite tree.
\end{definition}
\begin{definition}
Let $U$ be a structure of the signature $\sigma $ and $\Gamma =(S,U)$ be a
program over the structure\emph{\ }$U$. The program $\Gamma $ will
be called a \emph{computation tree} over the structure $U$ if $S$ is a finite
computation tree-scheme.
\end{definition}

\subsection{Computation-program-saturated Classes of Structures}

In this section, we consider some
results related to computation-program-saturated classes of structures.

\begin{definition}
A nonempty class $K$ of structures of the signature $\sigma $ will be called
\emph{computation-program-saturated} if any total relative to $K$
computation scheme of the signature $\sigma $ is strongly equivalent
relative to $K$ to a finite computation tree-scheme of the signature $\sigma $. 
\end{definition}

It
is clear that any program-saturated class $K$ is a
computation-program-saturated class.
Using Theorem \ref{T6.1}, we obtain the following statement.

\begin{proposition}
\label{P6.1} If a nonempty class $K$ of structures of the signature $\sigma $
has the property of compactness, then it is computation-program-saturated.
\end{proposition}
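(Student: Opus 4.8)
The plan is to reuse the construction $R(S,K)$ from the proof of Theorem \ref{T6.1} and to verify that, applied to a computation scheme, it produces a \emph{computation} tree-scheme. Let $S=(n,G_1)$ be a computation scheme of the signature $\sigma$ that is total relative to $K$. Since $K$ has the property of compactness, Lemma \ref{L6.1} guarantees that every complete path of $S$ satisfiable in $K$ is finite and that there are only finitely many such paths. This is exactly the input the construction of $R(S,K)$ needs, so $R(S,K)=(n,G_3)$ is a finite tree-scheme, and, as argued in the proof of Theorem \ref{T6.1}, the scheme $S$ is strongly equivalent relative to $K$ to $R(S,K)$. What remains is therefore not the finiteness or the strong equivalence, but the refinement that $R(S,K)$ is a \emph{computation} tree-scheme.

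To see this, I would first inspect the unfolding $R(S)=(n,G_2)$. By its defining property there is a one-to-one correspondence between the complete paths of $S$ and those of $R(S)$ under which corresponding paths are isomorphic. By the definition of isomorphic paths, each node of $R(S)$ lying on a complete path carries exactly the same label as its counterpart in $S$: a function node keeps its label $x_i\Leftarrow t$ with $t$ a primitive term, a predicate node keeps its primitive formula, and a terminal node keeps its number. Since no relabeling is ever performed, $R(S)$ is a computation tree-scheme.

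Passing from $R(S)$ to $R(S,K)$ only restricts to the subgraph $C$ induced by the nodes on the satisfiable-in-$K$ complete paths and then attaches, to each predicate node of $C$ having a single leaving edge, a new terminal node labeled with $0$ together with the missing edge. Restricting to a subgraph preserves all node labels, and each newly added node is a terminal node labeled with a number, which is permitted in a computation scheme. Hence $R(S,K)$ is again a computation tree-scheme, and by the first paragraph it is finite. Thus $S$ is strongly equivalent relative to $K$ to the finite computation tree-scheme $R(S,K)$, so $K$ is computation-program-saturated. The only step requiring any care is the bookkeeping that the label-preservation of the unfolding and of the subgraph passage keeps primitive terms primitive and primitive formulas primitive; but since path isomorphism already forces identical labels and no relabeling occurs, this verification is immediate, which is why the result follows so directly from Theorem \ref{T6.1}.
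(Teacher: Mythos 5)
Your proof is correct and takes essentially the same route as the paper: the paper obtains Proposition \ref{P6.1} from Theorem \ref{T6.1} together with the preceding remark that any program-saturated class is computation-program-saturated, and the substance behind that remark is exactly what you verify, namely that the construction $R(S,K)$ from the proof of Theorem \ref{T6.1} never relabels nodes (it only prunes to satisfiable paths and adds terminal nodes labeled $0$), so applied to a computation scheme it yields a finite \emph{computation} tree-scheme strongly equivalent to $S$. Your write-up simply makes explicit the label-preservation bookkeeping that the paper leaves implicit.
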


The next statement follows from Theorem \ref{T6.2}.

\begin{proposition}
\label{P6.2} Any nonempty axiomatizable class $K$ of structures of the
signature $\sigma $ is computation-program-saturated.
\end{proposition}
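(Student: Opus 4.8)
The plan is to deduce the statement from Theorem~\ref{T6.2} together with the explicit construction used in the proof of Theorem~\ref{T6.1}, exploiting the fact that a computation scheme is merely a scheme whose node labels happen to be primitive. First I would invoke Theorem~\ref{T6.2} to conclude that the axiomatizable class $K$ is program-saturated, and hence, by Theorem~\ref{T6.1}, that $K$ has the property of compactness. Now let $S$ be an arbitrary computation scheme of the signature $\sigma$ that is total relative to $K$. Since every computation scheme is in particular a scheme, $S$ is a total relative to $K$ scheme, so Lemma~\ref{L6.1} applies: all complete paths of $S$ that are satisfiable in $K$ are finite, and there are only finitely many of them.

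Next I would turn to the canonical tree-scheme $R(S,K)=(n,G_3)$ constructed just before Lemma~\ref{L6.1}. By Lemma~\ref{L6.1} it is a finite tree-scheme, and, exactly as in the proof of Theorem~\ref{T6.1}, the scheme $S$ is strongly equivalent relative to $K$ to $R(S,K)$. The crucial extra observation for the present proposition is that $R(S,K)$ inherits the labels of $S$ verbatim: each of its function nodes carries the same expression $x_i \Leftarrow t$ as the corresponding node of $S$, each of its predicate nodes carries the same formula, and the only genuinely new nodes are the added terminal nodes labeled with $0$, whose labels are irrelevant to being a computation scheme. Hence, because $S$ is a computation scheme (every function node uses a primitive term and every predicate node a primitive formula), the tree-scheme $R(S,K)$ is in fact a finite computation tree-scheme.

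Combining the two observations, every computation scheme $S$ that is total relative to $K$ is strongly equivalent relative to $K$ to the finite computation tree-scheme $R(S,K)$, which is precisely the assertion that $K$ is computation-program-saturated. This is the formal content of the remark, stated just before Proposition~\ref{P6.1}, that every program-saturated class is computation-program-saturated; the present proposition is then the instance obtained by passing an axiomatizable $K$ through Theorem~\ref{T6.2}.

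The step I expect to require the most care is the middle one, namely verifying that $R(S,K)$ is genuinely a computation scheme rather than merely a tree-scheme. A bare appeal to program-saturation only yields strong equivalence to \emph{some} finite tree-scheme, whose unsatisfiable paths could in principle carry non-primitive terms or formulas; the point is that the explicit construction $R(S,K)$ retains only (completions of) the satisfiable paths of $S$ and copies their node labels unchanged, so the primitivity of the labels of $S$ is preserved. For this reason I would route the argument through $R(S,K)$ rather than through an abstract invocation of program-saturation.
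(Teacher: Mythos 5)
Your proof is correct and takes essentially the same route as the paper, which derives the proposition from Theorem \ref{T6.2} together with the remark (stated just before Proposition \ref{P6.1}) that every program-saturated class is computation-program-saturated. Your middle step --- noting that a bare appeal to program-saturation only yields \emph{some} finite tree-scheme, which could carry non-primitive labels off its satisfiable paths, and that the explicit construction $R(S,K)$ avoids this because it copies the labels of $S$ verbatim --- is precisely the justification the paper leaves implicit behind the words ``it is clear.''
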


We already considered the following examples of axiomatizable classes of
structures:

\begin{itemize}
\item Classes of Boolean algebras, atomic Boolean algebras, and atomless
Boo\-lean algebras.

\item Classes of groups, abelian groups, abelian groups with all elements of
order $p$, where $p$ is a prime, torsion-free abelian groups.

\item Classes of commutative rings with unit, fields, fields of
characteristic $p$, where $p$ is a prime, fields of characteristic zero,
algebraically closed fields, real closed fields.
\end{itemize}

\subsection{Computation-program-saturated Structures}

In this section, we consider some results
related to computation-program-saturated structures.

\begin{definition}
Let $U$ be a structure of the signature $\sigma $. The structure $U$ will be
called \emph{computation-program-saturated} if the class $\{U\}$ is
computation-program-saturated. 
\end{definition}

The next statement follows from Theorem \ref{T6.3}.

\begin{proposition}
\label{P6.3} Let $U$ be a structure of the signature $\sigma $ and $\alpha$ be a cardinal.

{\rm (a)} If $U$ is an $\omega $-saturated structure, then $U$ is a
computation-program-saturated structure.

{\rm (b)} If $U$ is a model of the cardinality $\alpha $ of an $\alpha $%
-categorical theory, then $U$ is a computation-program-saturated
structure.
\end{proposition}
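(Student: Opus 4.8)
The plan is to derive both parts as immediate corollaries of Theorem \ref{T6.3}, using the already-noted fact that every program-saturated class is computation-program-saturated. Specializing that fact to the singleton class $\{U\}$, I obtain that every program-saturated structure is a computation-program-saturated structure. Thus it suffices, in each case, to establish that $U$ is program-saturated and then invoke this implication.

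For part (a), I would simply apply statement (a) of Theorem \ref{T6.3}: if $U$ is $\omega$-saturated, then $U$ is program-saturated, hence computation-program-saturated. For part (b), I would apply statement (b) of Theorem \ref{T6.3}: if $U$ is a model of cardinality $\alpha$ of an $\alpha$-categorical theory, then $U$ is program-saturated, and therefore computation-program-saturated.

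The only point requiring care is the justification of the underlying implication that a program-saturated structure is computation-program-saturated. Here I would observe that every primitive term is a term and every primitive formula is a formula, so every computation scheme is in particular a scheme, and a computation scheme total relative to $\{U\}$ is a scheme total relative to $\{U\}$. When $U$ is program-saturated, such a scheme $S$ is strongly equivalent to a finite tree-scheme; concretely, the tree-scheme $R(S,\{U\})$ produced in the proof of Theorem \ref{T6.1} witnesses this equivalence. Since $R(S,\{U\})$ is formed from the nodes of $S$ together with added terminal nodes labeled with $0$, every function node of $R(S,\{U\})$ is still labeled with a primitive term and every predicate node with a primitive formula. Hence $R(S,\{U\})$ is a finite computation tree-scheme, and $S$ is strongly equivalent to it.

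I expect no substantive obstacle: both parts are formal consequences of Theorem \ref{T6.3} combined with the label-preserving nature of the construction $R(S,\{U\})$. The main thing to verify is precisely that this construction introduces no non-primitive labels, which is immediate from its definition.
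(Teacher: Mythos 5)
Your proposal is correct and follows essentially the same route as the paper, which derives Proposition \ref{P6.3} directly from Theorem \ref{T6.3} together with the earlier remark that any program-saturated class (hence any program-saturated structure, via the singleton class $\{U\}$) is computation-program-saturated. Your additional verification that the construction $R(S,\{U\})$ preserves primitive labels, so that the witnessing finite tree-scheme is indeed a computation tree-scheme, is precisely the detail the paper leaves implicit in its ``it is clear'' claim.
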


We already mentioned that each of the following structures is, for some
cardinal $\alpha $, a model of the cardinality $\alpha $ of an $\alpha $%
-categorical theory:

\begin{itemize}
\item Countable atomless Boolean algebra.

\item Abelian group with all elements of order  $p$, where $p$ is a prime number.

\item Uncountable divisible torsion-free abelian group, in particular,
additive group of real numbers $(%
\mathbb{R}
;+,0)$.

\item Uncountable algebraically closed field of the characteristic zero or $%
p $, where $p$ is a prime number, in particular, the field of complex
numbers $(%
\mathbb{C}
;+,\cdot ,0,1)$.
\end{itemize}
Each of these structures is a computation-program-saturated structure.

\subsection{Elementary Extensions}

In this section, we consider the possibility of elementary extension of the structure to a structure that is computation-program-saturated.  In Section \ref{S6.5}, for any
structure $U$, we defined a cardinal $\alpha (U)$.
The next statement follows from Theorem \ref{T6.4}.

\begin{proposition}
\label{P6.4} Let $U_{1}$ be a structure of the signature $\sigma $. Then
there exists a computation-program-saturated structure $U_{2}$, which is an
elementary extension of the structure $U_{1}$ and which cardinality is equal
to $\alpha (U_{1})$.
\end{proposition}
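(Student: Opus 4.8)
The plan is to obtain $U_2$ directly from statement (b) of Theorem \ref{T6.4} and then upgrade program-saturation to computation-program-saturation using the reduction that already appears in the proof of Theorem \ref{T6.1}.

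First I would apply Theorem \ref{T6.4}(b) to the structure $U_1$. This produces a program-saturated structure $U_2$ that is an elementary extension of $U_1$ and whose universe $A_2$ satisfies $|A_2|=\alpha(U_1)$. This step carries all of the genuine content, since Theorem \ref{T6.4}(b) simultaneously secures the elementary-extension relation $U_1\prec U_2$ and the exact cardinality; nothing further about the construction of $U_2$ is needed here.

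Next I would verify that $U_2$, being program-saturated, is computation-program-saturated. The point is that a computation scheme is a special scheme, and the tree-scheme $R(S,K)$ constructed in the proof of Theorem \ref{T6.1} inherits the labels of $S$: by the definition of strong equivalence, isomorphic complete paths carry identical node labels, so whenever $S$ is a computation scheme every function node of $R(S,K)$ is labeled by a primitive term and every predicate node by a primitive formula (the added terminal nodes labeled $0$ do not affect this, since the primitivity constraint concerns only function and predicate nodes). Hence, taking $K=\{U_2\}$, any total computation scheme relative to $\{U_2\}$ is strongly equivalent relative to $\{U_2\}$ to the finite computation tree-scheme $R(S,\{U_2\})$, which is exactly the assertion that $\{U_2\}$ is computation-program-saturated, i.e., that $U_2$ is a computation-program-saturated structure.

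Combining the two steps, $U_2$ is a computation-program-saturated elementary extension of $U_1$ of cardinality $\alpha(U_1)$, as required. I expect no serious obstacle: the difficult construction is fully encapsulated in Theorem \ref{T6.4}(b), and the only additional thing to check is that the reduction $R(\cdot,K)$ preserves the primitive character of the labels, which is immediate from the label-preservation built into strong equivalence.
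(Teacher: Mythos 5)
Your proof is correct and takes essentially the same route as the paper: the paper derives Proposition \ref{P6.4} from Theorem \ref{T6.4}(b) together with its earlier remark that every program-saturated class is computation-program-saturated, and your label-preservation argument for $R(S,\{U_2\})$ is precisely the justification of that remark. The only detail worth stating explicitly is that the finiteness of $R(S,\{U_2\})$ comes from Lemma \ref{L6.1}, which applies because $\{U_2\}$ has the property of compactness by the converse direction of Theorem \ref{T6.1}.
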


\section{Conclusions\label{S6}}

In this paper, we studied program-saturated classes of structures. We proved that a necessary and sufficient condition for a class to be program-saturated is its  compactness. We showed that any axiomatizable class of structures is program-saturated. We also studied individual structures, each of which forms a class that is program-saturated, and showed that such structures include all models of the cardinality $\alpha $ of $\alpha $-categorical theories.
We studied the possibility of elementary extension of the structure to a structure for which the corresponding singleton class of structures is program-saturated. We showed that this is always possible and found the minimum cardinality of such an extension. Finally, we transferred the obtained results to programs that are essentially close to computation trees.
\subsection*{Acknowledgements}

Research reported in this publication was supported by King Abdullah
University of Science and Technology (KAUST).

\bibliographystyle{spmpsci}
\bibliography{abc_bibliography}

\begin{thebibliography}{10}
\providecommand{\url}[1]{{#1}}
\providecommand{\urlprefix}{URL }
\expandafter\ifx\csname urlstyle\endcsname\relax
  \providecommand{\doi}[1]{DOI~\discretionary{}{}{}#1}\else
  \providecommand{\doi}{DOI~\discretionary{}{}{}\begingroup
  \urlstyle{rm}\Url}\fi

\bibitem{Ben-Or83}
Ben-Or, M.: Lower bounds for algebraic computation trees (preliminary report).
\newblock In: The 15th Annual {ACM} Symposium on Theory of Computing, STOC
  1983, pp. 80--86 (1983)

\bibitem{BreimanFOS84}
Breiman, L., Friedman, J.H., Olshen, R.A., Stone, C.J.: {Classification and
  Regression Trees}.
\newblock Wadsworth and Brooks (1984)

\bibitem{ChangK92}
Chang, C.C., Keisler, H.J.: Model Theory, Third Edition, \emph{Studies in Logic
  and the Foundations of Mathematics}, vol.~73.
\newblock North-Holland (1992)

\bibitem{Dobkin76}
Dobkin, D.P., Lipton, R.J.: Multidimensional searching problems.
\newblock {SIAM} J. Comput. \textbf{5}(2), 181--186 (1976)

\bibitem{Gabrielov17}
Gabrielov, A., Vorobjov, N.N.: On topological lower bounds for algebraic
  computation trees.
\newblock Found. Comput. Math. \textbf{17}(1), 61--72 (2017)

\bibitem{Grigoriev96}
Grigoriev, D., Vorobjov, N.N.: Complexity lower bounds for computation trees
  with elementary transcendental function gates.
\newblock Theor. Comput. Sci. \textbf{157}(2), 185--214 (1996)

\bibitem{Jech02}
Jech, T.: Set Theory, Third Edition.
\newblock Springer Monographs in Mathematics. Springer (2002)

\bibitem{Kleene67}
Kleene, S.C.: Mathematical Logic.
\newblock Wiley (1967)

\bibitem{Moshkov87}
Moshkov, M.: On the programs with finite development.
\newblock In: L.~Budach, R.G. Bukharajev, O.B. Lupanov (eds.) Fundamentals of
  Computation Theory, International Conference FCT'87, Kazan, USSR, June 22-26,
  1987, Proceedings, \emph{Lecture Notes in Computer Science}, vol. 278, pp.
  323--327. Springer (1987)

\bibitem{Moshkov05}
Moshkov, M.: Time complexity of decision trees.
\newblock In: J.F. Peters, A.~Skowron (eds.) Trans. Rough Sets III,
  \emph{Lecture Notes in Computer Science}, vol. 3400, pp. 244--459. Springer
  (2005)

\bibitem{Moshkov22a}
Moshkov, M.: Rough analysis of computation trees.
\newblock Discret. Appl. Math. \textbf{321}, 90--108 (2022)

\bibitem{Yao97}
Yao, A.C.: Decision tree complexity and {B}etti numbers.
\newblock J. Comput. Syst. Sci. \textbf{55}(1), 36--43 (1997)

\end{thebibliography}

\end{document}